\def\bibnote#1#2{%
  \DTLnewrow{bibnotes}
  \DTLnewdbentry{bibnotes}{mylabel}{#1}
  \DTLnewdbentry{bibnotes}{mynote}{#2}
}
\hfil\NAT@anchor{#2}{\NAT@num}]}%
\hfil\NAT@anchor{#2}{\NAT@num}]%
\DTLiseq{\mylabel}{#2}]{bibnotes}{\mylabel=mylabel,\mynote=mynote}{\textit{\mynote}}
\newcommand{\R}{\mathbb{R}}
\newcommand{\norm}[1]{\left\lVert#1\right\rVert}
\newcommand*{\horzbar}{\rule[.5ex]{2.5ex}{0.5pt}}
\DeclareMathOperator{\Tr}{Tr}
\newtheorem{definition}{Definition}
\newtheorem{theorem}{Theorem}
\newtheorem{property}{Proposition}
\begin{document}

\begin{frontmatter}



\title{Privacy-Preserving Matrix Factorization for Recommendation Systems using Gaussian Mechanism}


\author[inst1]{Sohan Salahuddin Mugdho\corref{cor1}}
\ead{1606099@eee.buet.ac.bd}

\author[inst1]{Hafiz Imtiaz\corref{cor1}}
\ead{hafizimtiaz@eee.buet.ac.bd}

\affiliation[inst1]{organization={Department of Electrical and Electronic Engineering, Bangladesh University of Engineering and Technology},
            city={Dhaka},
            postcode={1205}, 
            country={Bangladesh}}

\cortext[cor1]{Corresponding authors}

\begin{abstract}
Building a recommendation system involves analyzing user data, which can potentially leak sensitive information about users. Anonymizing user data is often not sufficient for preserving user privacy. Motivated by this, we propose a privacy-preserving recommendation system based on the differential privacy framework and matrix factorization, which is one of the most popular algorithms for recommendation systems. As differential privacy is a powerful and robust mathematical framework for designing privacy-preserving machine learning algorithms, it is possible to prevent adversaries from extracting sensitive user information even if the adversary possesses their publicly available (auxiliary) information. We implement differential privacy via the Gaussian mechanism in the form of output perturbation and release user profiles that satisfy privacy definitions. We employ R\'enyi Differential Privacy for a tight characterization of the overall privacy loss. We perform extensive experiments on real data to demonstrate that our proposed algorithm can offer excellent utility for some parameter choices, while guaranteeing strict privacy.
\end{abstract}
\begin{keyword}
Recommendation System \sep Matrix Factorization \sep Differential Privacy \sep Gaussian Mechanism \sep R\'enyi Differential Privacy.
\end{keyword}

\end{frontmatter}


\section{Introduction}
\label{sec:Intro}
Recommendation systems are integral parts of the modern lifestyle. Whenever we take services of streaming platforms, such as Netflix, YouTube, Spotify, or any shopping platforms, we want recommendations for movies, songs, or items that we might like. Recommendation systems provide such services as they seek to infer how a user might rate a movie or an item if that user had watched or used it. Based on that predicted rating, the recommender system decides whether to recommend the movie or the item to the user. Recommendation systems are mathematically formulated as a matrix completion problem, which can be solved using several methods. One of such approaches is known as Collaborative Filtering (CF). Refer to the survey for further elaboration on collaborative filtering \cite{su2009survey}. Among the many collaborative filtering algorithms for building recommendation systems, Matrix Factorization (MF) \cite{koren2009matrix}, \cite{candes2009exact} is by far one of the most popular and successful. For example, a matrix factorization based collaborative filtering algorithm
won the ‘Netflix Prize’ competition \cite{bennett2007netflix}. It was an open competition to find the best algorithm for predicting the ratings of user-movie pairs. Unfortunately, this convenient service comes with caveats. Training a recommender system involves analyzing user data. It has been shown that simply anonymizing the data is inadequate to guarantee user privacy. Early studies \cite{aimeur2008alambic}, \cite{calandrino2011you}, \cite{mcsherry2009differentially} show that user data or preferences that may be considered insensitive such as movie ratings, can be exploited to deduce sensitive information such as the user’s real identity, medical conditions, or political inclinations. In fact, the anonymized medical data of a governor of Massachusetts were matched with his publicly available voter registration records to identify his medical records \cite{dwork2014algorithmic} successfully. Another infamous example is the identification of users from their movie rating data. These anonymized data were published in the ‘Netflix Prize’ competition \cite{bennett2007netflix}. However, it was validated that some of the anonymous users were identified by using their publicly available Internet Movie Database (IMDb) rating data \cite{narayanan2006break}. Consequently, developing privacy-preserving recommendation systems has become an active research field in recent years. The first study of a privacy-preserving matrix factorization algorithm was based on a cryptographic technique known as garbled circuits \cite{nikolaenko2013privacy}. Unfortunately, even after utilizing the sparsity characteristic of rating matrices, its computational expense was too much to use in practice. More recently, Differential Privacy \cite{dwork2006calibrating}, has become very popular as a mathematical framework for designing privacy-preserving algorithms due to being lightweight and mathematically rigorous. There are two major variants of differential privacy (DP), $\epsilon$-DP, which is known as pure differential privacy, and $(\epsilon, \delta)$-DP, which is known as approximate differential privacy \cite{dwork2014algorithmic}. $\epsilon$-DP can be realized using the Laplace and Exponential mechanisms; see details in \cite{machanavajjhala2011personalized}, \cite{dwork2006calibrating}, \cite{mcsherry2007mechanism}. It was shown in \cite{machanavajjhala2011personalized} that the Laplace mechanism could achieve almost identical accuracy as the Exponential mechanism.
\par
\noindent\textbf{Related works.} There exist research works that incorporate differential privacy into machine learning algorithms, including recommender systems. The implementation of differential privacy in Geometric Recommendation Algorithms is explored in \cite{mcsherry2009differentially}. Additionally, the incorporation of differential privacy is investigated in \cite{machanavajjhala2011personalized}, which is based on a graph that describes the connections between people and items. Privacy is also introduced in the ‘Slope One algorithm’, which is based on collaborative filtering \cite{park2011differentially}. Differentially private matrix factorization is discussed in \cite{chaudhuri2011differentially}, \cite{chaudhuri2008privacy}, whereas differential privacy is integrated with matrix factorization based collaborative filtering in \cite{liu2015fast}, \cite{hua2015differentially}. While $\epsilon$-DP is stricter in preserving privacy, it often provides worse utility than the approximate differential privacy \cite{dwork2014algorithmic}. Both $\epsilon$-DP and $(\epsilon, \delta)$-DP in recommender systems are studied in \cite{mcsherry2009differentially}. For scenarios with decentralized data, improved algorithms for differentially private matrix and tensor factorization are proposed in \cite{imtiaz2018distributed}. The notion of preserving privacy in distributed optimization is investigated in \cite{han2016differentially}. In such problems, the constraints can contain information related to users that can be exploited. One of the algorithms often used in signal processing is principal component analysis (PCA). The ‘Analyze Gauss’ algorithm releases differentially private PCA by computing private singular subspace via the Gaussian mechanism \cite{dwork2014analyze}. A nearly-linear time algorithm for computing differentially private PCA that utilizes a noisy version of the power method is proposed in \cite{hardt2014noisy}. A differentially private PCA algorithm named Symmetric Noise (SN) is proposed in \cite{imtiaz2016symmetric}, which proposes to use Wishart Noise to ensure better utility for the same level of privacy as the Analyze Gauss. Various mechanisms and implementation methods of differential privacy in different machine learning algorithms are discussed in \cite{sarwate2013signal}.
\par
\noindent\textbf{Our Contributions.}
In this paper, we propose a recommendation system based matrix factorization satisfying approximate differential privacy via the Gaussian mechanism \cite{dwork2014algorithmic}. We show that our algorithm provides an $(\epsilon_\mathrm{opt}, \delta_r)$ differentially private user profile matrix containing the inferred user profile vectors. $\epsilon_\mathrm{opt}$, and $\delta_r$ the privacy parameters that are defined later in Theorem \ref{algorithmprivacy}. The user profile matrix can be outsourced to build recommendation systems that use the released user profile vectors and appropriate movie or item profile vectors. The recommender can compute a simple inner product for predicting movies or items that are more likely to be preferred by a user and recommend those items to that user. We demonstrate our algorithm's potency and other major characteristics with varying privacy levels and other key parameters on three real datasets. We also show that our algorithm's utility can closely match that of the non-private algorithm for certain parameter choices. An interesting problem that comes with $(\epsilon, \delta)$-DP is mentioned in \cite{mcsherry2009differentially}. For any specific value of the parameter $\sigma$ of the Gaussian mechanism, there can be infinitely many combinations of $(\epsilon, \delta)$ pairs. Only the values of $\sigma$ are focused on, and no specific $(\epsilon, \delta)$ pairs are derived in \cite{mcsherry2009differentially}. In this paper, we provide a tighter characterization of the overall privacy of a multi-step algorithm to find the best $(\epsilon, \delta)$ pair using R\'enyi Differential Privacy (RDP) \cite{mironov2017renyi}.
\par
\noindent\textbf{Notations.} Matrices are represented with upper case letters, e.g. $X$ or $\Theta$. We represent matrix elements with corresponding lower case letters accompanied by row, column indices, e.g. $x_{ij}$. Vectors are denoted with lower case letters, with the corresponding row indices as subscript, e.g. $x_i$, the $i$-th row vector of matrix $X$. We denote algorithms, functions, and sets with calligraphic scripts, e.g. $\mathcal{A}$. Inner product and Hadamard product are denoted with the operators $\langle \cdot \rangle$ and $\odot$, respectively. $\norm{\cdot}_F$ denotes the Frobenius norm of a matrix or a vector, $\norm{\cdot}_1$ denotes the $\mathcal{L}_1$ norm and $\norm{\cdot}_2$ denotes the Euclidean (or $\mathcal{L}_2$) norm. The gradient of a function $\mathcal{F}$ with respect to a matrix $X$ is denoted as $\nabla_X \mathcal{F}$. $\Tr(\cdot)$ denotes the trace operation and $\Delta_2(\cdot)$ denotes $\mathcal{L}_2$ sensitivity to be defined later in Definition \ref{definition3}.

\section{Formulation of the Matrix Factorization Problem}
\label{sec:MF}
We review collaborative filtering according to \cite{hua2015differentially}, where $n_m$ movies are to be rated by $n_u$ users. We denote $V \in \R^{n_m \times n_u}$ as the full rating matrix, where $v_{ij}$ is the rating of movie or item $i$ by user $j$. Note that, as users rate only a few movies, the rating matrix $V$ is typically an extremely sparse matrix. We also denote an observation mask $R \in \R^{n_m \times n_u}$, where
\begin{equation*}
r_{ij} = 
\begin{cases}
\text{1,} &\text{if user $j$ has rated movie $i$},\\
\text{0,} &\text{otherwise.}
\end{cases}
\end{equation*}
Recommendation systems seek to estimate how a user would have rated a movie if they had seen it based on the prediction of the missing entries in $V$. One of the most prominent strategies for tackling this type of problem is matrix factorization (MF). In MF, the rating matrix $V$ is approximated by a low rank factorization, \cite{liu2015fast}, i.e., 
\begin{equation}
\label{eq:1}
\text{$V \approx X\Theta^T$, where $X \in \R^{n_m \times n}$, and $\Theta \in \R^ {n_u \times n}$.}
\end{equation}
Each rating is calculated as,
\begin{equation}
\label{eq:2}
v_{ij} \approx \langle x_i,\theta_j\rangle.
\end{equation}
Here, $X$ and $\Theta$ are profile matrices containing the inferred profile vectors \cite{hua2015differentially} of movies and users, respectively. Profile vectors $x_i \in \R^{1\times n}$ and $\theta_j \in \R^{1\times n}$ are inferred for characterizing movie $i$ and user $j$, respectively. Both vectors are row-vectors of dimension $n$, where typically $n \ll n_m \hspace{4pt} and \hspace{4pt} n \ll n_u$. We sometimes denote $X\Theta^T$ as $\hat{V}$ in this paper. The solution then comes from solving the following Regularized Least Squares minimization \cite{hua2015differentially}:
\begin{equation}
\label{cost}
\underset{X,\Theta}{\min}\hspace{2pt}\mathcal{C}(X,\Theta) = 
\underset{X,\Theta}{\min}\hspace{2pt}\frac{1}{2}\norm{\hat{V}-V}^2_F + \frac{1}{2}\lambda\left(\norm{X}^2_F + \norm{\Theta}^2_F\right),
\end{equation}
where $\lambda$ is a positive scalar, also known as the regularization parameter. To find the solution for \eqref{cost}, we perform alternating gradient descent, that is we perform gradient descent to update $X$ and $\Theta$ alternatingly. 
\par
Our objective is to compute and release the user profile matrix $\Theta$ which can be used to predict ratings for the users using third party item profile matrices. Because $\Theta$ needs to be derived using sensitive user data (in our case, ratings), we propose a method for ensuring privacy. More specifically, we use differentially private estimates of the gradient to update, and compute the differentially private user profile matrix $\Theta$ such that it is a close approximation of the true (or, non-privacy-preserving) $\Theta$.
\section{Preliminaries of Privacy}
\label{sec:PrePrivacy}
\subsection{Differential Privacy}
\label{subsec:DP}
Many signal processing and machine learning algorithms operate on private or sensitive data. As mentioned before, the results of such algorithms might reveal personal information about people in the dataset. Differential privacy is a robust cryptographically motivated framework for preserving user privacy \cite{imtiaz2018distributed}, which is introduced in \cite{dwork2006calibrating}, and has quickly become a popular tool for developing privacy-preserving algorithms due to being lightweight and mathematically rigorous \cite{han2016differentially}. Differential privacy provides aggregate information about a dataset without revealing any further information (that can be related to already public/available data) about the individuals that make up the dataset. It essentially ensures that, whether or not an individual's data is available, the result of a differentially private method seen by an adversary be roughly the same. Formally, differential privacy is defined as below:
\begin{definition}[Pure Differential Privacy \cite{sarwate2013signal}]
\label{definition1}
An algorithm $\mathcal{A}(\cdot)$ taking values in a set $\mathcal{T}$ provides $\epsilon$-differential privacy if $\mathbb{P}(\mathcal{A}(\mathcal{D}) \in \mathcal{S}) \le \exp(\epsilon)\cdot\mathbb{P}(\mathcal{A}(\mathcal{D}^\prime) \in \mathcal{S})$, for all measurable $\mathcal{S} \subseteq \mathcal{T}$ and all data sets $\mathcal{D}$ and $\mathcal{D^\prime}$ differing in a single entry.
\end{definition}
\begin{definition}[Approximate Differential Privacy  \cite{sarwate2013signal}]
\label{definition2}
An algorithm $\mathcal{A}(\cdot)$ taking values in a set $\mathcal{T}$ provides $(\epsilon, \delta)$ differential privacy if $\mathbb{P}(\mathcal{A}(\mathcal{D}) \in \mathcal{S}) \le \exp(\epsilon)\cdot\mathbb{P}(\mathcal{A}(\mathcal{D}^\prime) \in \mathcal{S}) + \delta$, for all measurable $\mathcal{S} \subseteq \mathcal{T}$ and all data sets $\mathcal{D}$ and $\mathcal{D^\prime}$ differing in a single entry.
\end{definition}
Here, $\epsilon$ and $\delta$ are privacy parameters. $\epsilon$ is interpreted as the privacy risk, and $\delta$ is interpreted as the probability that the privacy guarantee fails~\cite{imtiaz2018distributed}. We want both quantities to be as small as possible to ensure better privacy. The privacy guarantee of definition \ref{definition2} is interpreted as being $\epsilon$-differentially private \emph{except with probability $\delta$} \cite{mironov2017renyi}. That is, it offers a slightly weaker privacy guarantee for $\delta > 0$, and reduces to definition \ref{definition1}, pure $\epsilon$-differentially private for $\delta = 0$.
\par
There are several methodologies for ensuring differential privacy for an algorithm. Output perturbation \cite{chaudhuri2011differentially}, \cite{sarwate2013signal} is one of such methods. In this paper, we utilize output perturbation using the Gaussian Mechanism \cite{dwork2006calibrating}, \cite{dwork2014algorithmic}, which satisfies $(\epsilon, \delta)$-DP. Below, we review the definition and theorem of the Gaussian mechanism from \cite{dwork2014algorithmic}.

\begin{definition}[Gaussian Mechanism \cite{dwork2014algorithmic}]
\label{definition3}
Let $\mathcal{F} : \mathbb{N}^{|\mathcal{X}|}
\rightarrow \mathbb{R}^d$ be an arbitrary $d$-dimensional function, and define its $\mathcal{L}_2$ sensitivity be: $\Delta_2\mathcal{F} = \underset{\text{adjacent x,y}}{\max} \norm{\mathcal{F}(x) - \mathcal{F}(y)}_2$. The \emph{Gaussian Mechanism} with parameter $\sigma$ adds noise scaled to $\mathcal{N}(0,\sigma^2)$ to each of the $d$ components of the output.
\end{definition}
\begin{theorem}[Privacy of the Gaussian Mechanism \cite{dwork2014algorithmic}]
\label{theoremGuassian}
Given arbitrary $\epsilon \in (0, 1)$, the Gaussian Mechanism with parameter $\sigma \ge \frac{\Delta_2\mathcal{F}}{\epsilon}\sqrt{2\log\frac{1.25}{\delta}}$ is ($\epsilon, \delta$)-differentially private.
\end{theorem}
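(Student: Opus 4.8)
The plan is to follow the classical argument via the \emph{privacy loss random variable}. Fix a pair of adjacent inputs $x,y$, write the mechanism output as $\mathcal{A}(x) = \mathcal{F}(x) + Z$ with $Z \sim \mathcal{N}(0,\sigma^{2} I_d)$, and let $p_x, p_y$ denote the densities of $\mathcal{A}(x)$ and $\mathcal{A}(y)$. Define the privacy loss $L(o) = \ln\frac{p_x(o)}{p_y(o)}$. The first step is the standard reduction: it suffices to show that $\Pr_{o \sim \mathcal{A}(x)}\!\left[\,|L(o)| > \epsilon\,\right] \le \delta$ holds for every ordered pair of adjacent inputs. I would prove this reduction by fixing an arbitrary measurable $\mathcal{S}$, setting $\mathcal{S}^{\star} = \{o : p_x(o) > e^{\epsilon} p_y(o)\} = \{o : L(o) > \epsilon\}$, and splitting $\Pr[\mathcal{A}(x) \in \mathcal{S}] = \Pr[\mathcal{A}(x) \in \mathcal{S}\cap\mathcal{S}^{\star}] + \Pr[\mathcal{A}(x) \in \mathcal{S}\setminus\mathcal{S}^{\star}]$; the first term is at most $\Pr[\mathcal{A}(x)\in\mathcal{S}^{\star}] \le \Pr[|L|>\epsilon] \le \delta$, and on $\mathcal{S}\setminus\mathcal{S}^{\star}$ the pointwise bound $p_x(o) \le e^{\epsilon}p_y(o)$ yields a second term of at most $e^{\epsilon}\Pr[\mathcal{A}(y)\in\mathcal{S}]$, so $\mathcal{A}$ is $(\epsilon,\delta)$-DP.

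Second, I would reduce to one dimension. Since $Z$ is spherically symmetric, after an orthogonal change of coordinates aligning an axis with $\mathcal{F}(x)-\mathcal{F}(y)$ only that coordinate affects $L$, and the shift has magnitude $s := \norm{\mathcal{F}(x)-\mathcal{F}(y)}_2 \le \Delta_2\mathcal{F}$. A short computation gives $L = \frac{1}{2\sigma^{2}}\bigl(s^{2} + 2 s Z_{1}\bigr)$ with $Z_{1}\sim\mathcal{N}(0,\sigma^{2})$, so $L$ is Gaussian, and by the triangle inequality $|s^{2}+2sZ_{1}| \le s^{2} + 2s|Z_{1}|$ one gets $|L| \le \epsilon$ whenever $|Z_{1}| \le \frac{\sigma^{2}\epsilon}{s} - \frac{s}{2}$. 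The right-hand side decreases in $s$, so the worst case is $s = \Delta := \Delta_2\mathcal{F}$, and the whole statement reduces to proving $\Pr_{Z\sim\mathcal{N}(0,\sigma^{2})}\!\left[\,|Z| > \frac{\sigma^{2}\epsilon}{\Delta} - \frac{\Delta}{2}\,\right] \le \delta$ for $\sigma \ge \frac{\Delta}{\epsilon}\sqrt{2\log\frac{1.25}{\delta}}$.

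Third --- and this is the step I expect to be the real work --- comes the explicit Gaussian tail estimate. Writing $\sigma = \frac{c\Delta}{\epsilon}$ with $c = \sqrt{2\log(1.25/\delta)}$, the threshold equals $T = \Delta\bigl(\frac{c^{2}}{\epsilon} - \frac{1}{2}\bigr)$, which I must first check is positive: this is exactly where the hypothesis $\epsilon\in(0,1)$ is used, since then $\frac{c^{2}}{\epsilon} \ge c^{2}$, together with the mild requirement that $\delta$ be small enough that $c^{2} > \frac{1}{2}$. Normalizing, the threshold is $t = T/\sigma = c - \frac{\epsilon}{2c}$ standard deviations, and applying $\Pr[\mathcal{N}(0,1) > t] \le \frac{1}{t\sqrt{2\pi}}e^{-t^{2}/2}$ to both tails gives $\Pr[|Z|>T] \le \frac{2}{t\sqrt{2\pi}}e^{-t^{2}/2}$. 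Using $\frac{t^{2}}{2} = \frac{c^{2}}{2} - \frac{\epsilon}{2} + \frac{\epsilon^{2}}{8c^{2}}$ and $e^{-c^{2}/2} = \frac{\delta}{1.25}$, this becomes $\frac{\delta}{1.25}\cdot\frac{2\, e^{\epsilon/2}}{t\sqrt{2\pi}}\, e^{-\epsilon^{2}/(8c^{2})}$, so it remains to verify the numerical inequality $\frac{2\, e^{\epsilon/2}}{1.25\, t\sqrt{2\pi}}\, e^{-\epsilon^{2}/(8c^{2})} \le 1$; bounding $e^{\epsilon/2} \le e^{1/2}$, dropping $e^{-\epsilon^{2}/(8c^{2})} \le 1$, and noting $t = c - \frac{\epsilon}{2c}$ is bounded below once $\delta$ is small reduces this to an explicit constant check. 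The main obstacle is precisely this last inequality: the slack needed to close it is what forces the constant $1.25$ (rather than $1$) in the statement. Combining the reduction lemma, the worst-case one-dimensional reduction, and this tail bound completes the proof; since the statement is quoted from \cite{dwork2014algorithmic}, this sketch is included only for completeness.
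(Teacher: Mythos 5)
Your proposal is correct, but note that the paper does not prove this theorem at all: it is quoted verbatim from Dwork and Roth, and the text explicitly refers the reader to that book for the proof. Your sketch faithfully reconstructs that reference's argument (reduction to bounding the privacy-loss random variable, rotation to the one-dimensional worst case $s=\Delta_2\mathcal{F}$, then the Gaussian tail calculation with $c=\sqrt{2\log(1.25/\delta)}$), and your explicit caveat that the final constant check needs $\delta$ sufficiently small is an accurate reflection of a known limitation of the classical analysis rather than a flaw in your reasoning.
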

\noindent We refer the reader to \cite{dwork2014algorithmic} for the detailed proof of Theorem \ref{theoremGuassian}.

\subsection{R\'enyi Differential Privacy}
\label{subsec:RDP}
Determining the overall privacy loss of a multi-step algorithm is challenging. For ($\epsilon, \delta$) differential privacy, the advanced composition theorem \cite{dwork2014algorithmic} can be loose \cite{mironov2017renyi}. The main reason is that for a given parameter $\sigma$, there can be an endless number of combinations of $\epsilon$ and $\delta$. R\'enyi Differential Privacy (RDP) proposes a much more straightforward composition rule that has been demonstrated to be reliable \cite{mironov2017renyi}. We first review some important propositions of the RDP below:

\begin{property}[From RDP to Differential privacy \cite{mironov2017renyi}]
\label{prop1}
If $\mathcal{A}$ is an ($\alpha, \epsilon_r$)-RDP mechanism, then it also satisfies ($\epsilon_r + \frac{\log\frac{1}{\delta_r}}{\alpha-1},\delta_r$)-differential privacy for any $0 < \delta_r < 1$ and $\alpha > 1$. 
\end{property}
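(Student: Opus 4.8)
The plan is to convert the R\'enyi-divergence bound into a hypothesis-testing-style bound on output probabilities, by isolating a low-probability "bad event'' on which the privacy loss is large. Fix adjacent datasets $\mathcal{D}, \mathcal{D}'$ and write $P$ and $Q$ for the distributions of $\mathcal{A}(\mathcal{D})$ and $\mathcal{A}(\mathcal{D}')$, with densities (or pmfs) $p$ and $q$. Recall that $(\alpha, \epsilon_r)$-RDP means the order-$\alpha$ R\'enyi divergence between the output distributions of adjacent datasets is at most $\epsilon_r$; equivalently, $\mathbb{E}_{x\sim Q}\!\left[\left(p(x)/q(x)\right)^\alpha\right] \le e^{(\alpha-1)\epsilon_r}$. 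The goal is to show $\mathbb{P}(\mathcal{A}(\mathcal{D})\in\mathcal{S}) \le e^{\epsilon}\,\mathbb{P}(\mathcal{A}(\mathcal{D}')\in\mathcal{S}) + \delta_r$ for every measurable $\mathcal{S}$, where $\epsilon = \epsilon_r + \frac{\log\frac{1}{\delta_r}}{\alpha-1}$.

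First I would introduce the privacy-loss function $L(x) = \log\frac{p(x)}{q(x)}$ together with a threshold $c$ to be fixed later, and split
\[
\mathbb{P}(\mathcal{A}(\mathcal{D})\in\mathcal{S}) = \int_{\mathcal{S}\cap\{L\le c\}} p\,dx \;+\; \int_{\mathcal{S}\cap\{L> c\}} p\,dx .
\]
On the first region $p \le e^{c} q$ pointwise, so that piece is at most $e^{c}\,\mathbb{P}(\mathcal{A}(\mathcal{D}')\in\mathcal{S})$. The second piece is at most $\mathbb{P}_{x\sim P}[L(x) > c]$, so it remains to bound this tail by $\delta_r$ for a suitable choice of $c$.

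Next I would bound the tail by a Chernoff/Markov argument, using $\alpha - 1 > 0$:
\[
\mathbb{P}_{x\sim P}[L(x) > c] = \mathbb{P}_{x\sim P}\!\left[e^{(\alpha-1)L(x)} > e^{(\alpha-1)c}\right] \le e^{-(\alpha-1)c}\,\mathbb{E}_{x\sim P}\!\left[\left(\tfrac{p(x)}{q(x)}\right)^{\alpha-1}\right].
\]
The key identity is the change of measure $\mathbb{E}_{x\sim P}\!\left[(p/q)^{\alpha-1}\right] = \int p\,(p/q)^{\alpha-1}\,dx = \int q\,(p/q)^{\alpha}\,dx = \mathbb{E}_{x\sim Q}\!\left[(p/q)^{\alpha}\right] \le e^{(\alpha-1)\epsilon_r}$, which reduces the tail bound to $\mathbb{P}_{x\sim P}[L>c] \le e^{(\alpha-1)(\epsilon_r - c)}$. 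Setting the right-hand side equal to $\delta_r$ forces $c = \epsilon_r + \frac{\log\frac{1}{\delta_r}}{\alpha-1}$, and recombining the two pieces yields $\mathbb{P}(\mathcal{A}(\mathcal{D})\in\mathcal{S}) \le e^{c}\,\mathbb{P}(\mathcal{A}(\mathcal{D}')\in\mathcal{S}) + \delta_r$ with exactly this $c = \epsilon$. Since the RDP hypothesis holds for both orderings of the adjacent pair, the same inequality holds with $\mathcal{D}$ and $\mathcal{D}'$ swapped, giving $\left(\epsilon_r + \frac{\log\frac{1}{\delta_r}}{\alpha-1},\, \delta_r\right)$-differential privacy.

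I expect the only real obstacle to be technical bookkeeping rather than anything conceptual: justifying the change-of-measure identity when $q(x)$ can vanish (one restricts to the support of $Q$ and notes $P \ll Q$ whenever $D_\alpha$ is finite), and checking that the Markov step points in the correct direction — which is precisely where the hypothesis $\alpha > 1$ enters. The remaining content is the one-line optimization that pins down the threshold $c$.
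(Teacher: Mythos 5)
Your argument is correct and complete, but note that the paper itself never proves Proposition~\ref{prop1}: it is stated as a reviewed property and attributed to \cite{mironov2017renyi}, so there is no in-paper proof to match against. Your route is the privacy-loss tail-bound argument: split $\mathbb{P}(\mathcal{A}(\mathcal{D})\in\mathcal{S})$ on the event $\{L\le c\}$ with $L=\log(p/q)$, bound the bad event by Markov applied to $e^{(\alpha-1)L}$ together with the change of measure $\mathbb{E}_{P}\big[(p/q)^{\alpha-1}\big]=\mathbb{E}_{Q}\big[(p/q)^{\alpha}\big]\le e^{(\alpha-1)\epsilon_r}$, and optimize the threshold to get $c=\epsilon_r+\frac{\log(1/\delta_r)}{\alpha-1}$ with tail mass exactly $\delta_r$; your treatment of the technical points is also right ($P\ll Q$ follows from finiteness of $D_\alpha$, $\alpha>1$ is what makes the Markov step point the right way, and the guarantee transfers to both orderings since adjacency is symmetric). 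For comparison, Mironov's original proof of this proposition proceeds differently: it uses H\"older's inequality to obtain the probability-preservation bound $\mathbb{P}(\mathcal{A}(\mathcal{D})\in\mathcal{S})\le\big(e^{\epsilon_r}\,\mathbb{P}(\mathcal{A}(\mathcal{D}')\in\mathcal{S})\big)^{\frac{\alpha-1}{\alpha}}$ and then finishes with a case analysis on whether this quantity exceeds $\delta_r$. That argument is a bit shorter; yours makes explicit the interpretation the paper itself invokes (``$\epsilon$-DP except with probability $\delta$'') and is the template used in subsequent sharpened RDP-to-DP conversions, and it recovers exactly the constants stated in the proposition.
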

\begin{property}[Composition of RDP \cite{mironov2017renyi}]
\label{prop2}
Let $\mathcal{A} : \mathbb{D} \rightarrow \mathbb{T}_1$ be $(\alpha, \epsilon_{r1})$-RDP and $\mathcal{B} : \mathbb{T}_1 \times \mathbb{D} \rightarrow \mathbb{T}_2$ be $(\alpha, \epsilon_{r2})$-RDP, then the mechanism defined as (X,Y), where $X \sim \mathcal{A}(D)$ and $Y \sim \mathcal{B}(X, D)$, satisfies $(\alpha, \epsilon_{r1} + \epsilon_{r2})$-RDP.
\end{property}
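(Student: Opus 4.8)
\noindent The plan is to argue directly from the definition of R\'enyi differential privacy: a randomized mechanism $\mathcal{M}$ is $(\alpha,\epsilon)$-RDP when, for every pair of adjacent datasets, the R\'enyi divergence of order $\alpha$ between the two output distributions is at most $\epsilon$, where $D_\alpha(P\|Q)=\frac{1}{\alpha-1}\log\int P(t)^\alpha Q(t)^{1-\alpha}\,dt$, equivalently $\exp\big((\alpha-1)D_\alpha(P\|Q)\big)=\int P(t)^\alpha Q(t)^{1-\alpha}\,dt$. Fix adjacent datasets $D,D'$ and let $P$ and $Q$ be the joint laws of the output pair $(X,Y)$ when the composed mechanism is run on $D$ and on $D'$, respectively. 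First I would invoke the chain rule for densities, writing $P(x,y)=P_{\mathcal{A}(D)}(x)\,P_{\mathcal{B}(x,D)}(y)$ and $Q(x,y)=Q_{\mathcal{A}(D')}(x)\,Q_{\mathcal{B}(x,D')}(y)$, which is precisely the dependency structure that $X\sim\mathcal{A}(D)$, $Y\sim\mathcal{B}(X,D)$ encodes.

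Next I would substitute this factorization into $\exp\big((\alpha-1)D_\alpha(P\|Q)\big)=\iint P(x,y)^\alpha Q(x,y)^{1-\alpha}\,dy\,dx$, so that the integrand splits into an $x$-part and a $y$-part. Since the integrand is nonnegative, Tonelli's theorem permits carrying out the $y$-integral first; for each fixed $x$ it equals $\exp\big((\alpha-1)D_\alpha(\mathcal{B}(x,D)\|\mathcal{B}(x,D'))\big)$. Because $D$ and $D'$ are adjacent while the auxiliary input $x$ is held fixed, $\mathcal{B}(x,D)$ and $\mathcal{B}(x,D')$ are outputs of $\mathcal{B}$ on adjacent inputs, so this inner quantity is bounded by $\exp\big((\alpha-1)\epsilon_{r2}\big)$ — and, crucially, by the \emph{same} bound for every $x$. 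Pulling this constant out of the remaining integral over $x$ leaves exactly $\exp\big((\alpha-1)D_\alpha(\mathcal{A}(D)\|\mathcal{A}(D'))\big)\le\exp\big((\alpha-1)\epsilon_{r1}\big)$, by the RDP guarantee of $\mathcal{A}$.

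Multiplying the two bounds gives $\exp\big((\alpha-1)D_\alpha(P\|Q)\big)\le\exp\big((\alpha-1)(\epsilon_{r1}+\epsilon_{r2})\big)$; taking logarithms and dividing by $\alpha-1>0$ yields $D_\alpha(P\|Q)\le\epsilon_{r1}+\epsilon_{r2}$. Since $D,D'$ were arbitrary adjacent datasets, the composed mechanism $(X,Y)$ is $(\alpha,\epsilon_{r1}+\epsilon_{r2})$-RDP, which is the claim.

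The one step that needs care — more a subtlety than a genuine obstacle — is the uniformity of the inner bound: one must note that the $(\alpha,\epsilon_{r2})$-RDP property of $\mathcal{B}$ holds for \emph{every} fixed value of its first (auxiliary) argument, and it is exactly this that lets $\exp\big((\alpha-1)\epsilon_{r2}\big)$ be factored out of the outer integral before $\mathcal{A}$'s guarantee is applied; without it the conditioning on $X$ could not be decoupled. Beyond this, only routine measure-theoretic bookkeeping remains (null sets where a density vanishes, and justification of the interchange of integrals), and the boundary orders $\alpha=1$ and $\alpha=\infty$ reduce respectively to the chain rule for KL divergence and to ordinary max-divergence composition of differential privacy.
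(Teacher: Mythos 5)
Your argument is correct: this proposition is stated in the paper without proof, being quoted directly from the cited reference \cite{mironov2017renyi}, and your derivation is essentially the standard proof given there --- factorize the joint density of $(X,Y)$, integrate out $Y$ first, bound the inner integral by $\exp\bigl((\alpha-1)\epsilon_{r2}\bigr)$ uniformly in the auxiliary argument $x$ (the key point you correctly flag), and then apply $\mathcal{A}$'s guarantee to the remaining integral. Nothing further is needed.
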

\begin{property}[RDP and Gaussian Mechanism \cite{mironov2017renyi}] 
\label{prop3}
If $\mathcal{A}$ has $\mathcal{L}_2$ sensitivity 1, then the Gaussian mechanism $G_{\sigma}\mathcal{A}(D) = \mathcal{A}(D) + E, where E \sim \mathcal{N}(0,\sigma^2)$ satisfies $(\alpha, \frac{\alpha}{2\sigma^2})$-RDP. Additionally, a composition of J Gaussian mechanisms satisfies $(\alpha,\frac{\alpha J}{2\sigma^2})$-RDP
\end{property}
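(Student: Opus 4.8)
The plan is to argue directly from the definition of R\'enyi differential privacy: a mechanism is $(\alpha, \epsilon_r)$-RDP when, for every pair of adjacent datasets $D$ and $D'$, the R\'enyi divergence of order $\alpha$ between the output distributions is at most $\epsilon_r$, i.e. $D_\alpha(P \| Q) = \frac{1}{\alpha-1}\log \mathbb{E}_{z\sim Q}[(p(z)/q(z))^\alpha] \le \epsilon_r$, where $p,q$ are the densities of $P = G_\sigma\mathcal{A}(D)$ and $Q = G_\sigma\mathcal{A}(D')$. Since $\mathcal{A}$ has $\mathcal{L}_2$ sensitivity $1$, these output distributions are spherical Gaussians $\mathcal{N}(\mu_1, \sigma^2 I_d)$ and $\mathcal{N}(\mu_2, \sigma^2 I_d)$ with $\norm{\mu_1 - \mu_2}_2 \le 1$. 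The first claim therefore reduces entirely to computing the R\'enyi divergence between two Gaussians sharing a common covariance.

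First I would reduce to one dimension. Because the noise vector $E$ has independent coordinates, the density ratio $p/q$ factors over coordinates, and the R\'enyi divergence of a product of measures equals the sum of the coordinatewise divergences. For a single coordinate with mean gap $t$, I would write $\mathbb{E}_{z\sim Q}[(p(z)/q(z))^\alpha]$ as a Gaussian integral, expand the two quadratic forms appearing in the exponent, and complete the square in $z$. The linear and constant terms rearrange so that what remains under the integral is again a normalized Gaussian density, which integrates to $1$, leaving the closed form $\mathbb{E}_{z\sim Q}[(p(z)/q(z))^\alpha] = \exp(\alpha(\alpha-1)t^2/(2\sigma^2))$, hence $D_\alpha = \alpha t^2/(2\sigma^2)$ for that coordinate. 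Summing over the $d$ coordinates gives $D_\alpha(P \| Q) = \alpha\norm{\mu_1-\mu_2}_2^2/(2\sigma^2) \le \alpha/(2\sigma^2)$, which is exactly the claimed $(\alpha, \frac{\alpha}{2\sigma^2})$-RDP bound.

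For the second claim I would invoke Proposition \ref{prop2} inductively, $J-1$ times: by the first part each Gaussian release is $(\alpha, \frac{\alpha}{2\sigma^2})$-RDP, and composing the $j$-th release with the previous $j-1$ (whose composition is $(\alpha, (j-1)\frac{\alpha}{2\sigma^2})$-RDP by the inductive hypothesis) adds $\frac{\alpha}{2\sigma^2}$ at the same order $\alpha$; after $J$ releases this accumulates to $(\alpha, \frac{\alpha J}{2\sigma^2})$-RDP. Note that Proposition \ref{prop2} permits the $j$-th mechanism to depend on earlier outputs, so this also covers the adaptive (iterative gradient-update) setting we need downstream.

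The main obstacle is the Gaussian moment identity itself: one must carry out the completion of the square carefully enough to confirm that the surviving exponent is precisely $\alpha(\alpha-1)t^2/(2\sigma^2)$ and that the leftover integrand is a genuine normalized Gaussian density contributing only a factor of $1$. Once that identity is in hand, the reduction across coordinates, the sensitivity bound $\norm{\mu_1-\mu_2}_2 \le 1$, and the composition step are all routine.
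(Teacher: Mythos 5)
The paper does not prove Proposition \ref{prop3}; it is quoted as a known result from \cite{mironov2017renyi}, whose standard derivation is exactly what you reconstruct: the R\'enyi divergence of order $\alpha$ between two equal-covariance spherical Gaussians is $\alpha\norm{\mu_1-\mu_2}_2^2/(2\sigma^2)$ (via coordinatewise additivity and completing the square), bounded by $\alpha/(2\sigma^2)$ under unit $\mathcal{L}_2$ sensitivity, with the $J$-fold statement following from iterated (adaptive) composition as in Proposition \ref{prop2}. Your argument is correct and complete, including the point that adaptivity is covered, so there is nothing to add.
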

\section{Proposed Differentially Private Matrix Factorization}
\label{sec:DPMF}
\noindent\textbf{Trust Model.} We assume that the recommender is the data curator (that is, the recommender holds the user ratings) and is trustworthy (that is, the recommender would not reveal the ratings of particular users, and will not collude with an adversary) \cite{han2016differentially}. The curator needs to provide aggregate information without divulging user-specific information, as the queries may come from an adversary who intends to learn sensitive information about a user. We refer the reader to surveys \cite{sarwate2013signal}, \cite{dwork2008differential}, and a textbook on this topic \cite{dwork2014algorithmic} for further details.
\par
\noindent\textbf{Algorithm Formulation.} Recall that for computing the user profile matrix from the given ratings, we need to solve the optimization problem given in \eqref{cost} using an alternating least squares approach. To that end, we use gradient descent with $X$ and $\Theta$ as optimization variables. We now analytically find the gradients of the loss function $\mathcal{C}(X, \Theta)$ with respect to $X$ and $\Theta$. First, we find the gradient of $\mathcal{C}(X,\Theta)$ with respect to $X$ as:
\begin{equation}
\begin{aligned}
\label{gradx}
\nabla_X \mathcal{C}(X,\Theta)
&= \nabla_X\frac{1}{2} \bigg[\norm{\hat{V}-V}^2_F +\lambda\left(\norm{X}^2_F+\norm{\Theta}^2_F\right)\bigg] \\
&= \nabla_X\frac{1}{2} \bigg[\Tr\left(X\Theta^T-V\right)^T\left(X\Theta^T-V\right) \\
& \hspace{12pt} +\lambda \Tr\left(X^TX\right) + \lambda \Tr\left(\Theta^T\Theta\right) \bigg]\\
&=\nabla_X\frac{1}{2}\Tr\left(\Theta X^TX\Theta^T\right) \\
& \hspace{12pt} - \nabla_X\frac{1}{2} \times 2\Tr\left(\Theta X^TV\right) +\nabla_X\frac{1}{2}\Tr\left(V^TV\right) \\
& \hspace{12pt} +\nabla_X\frac{1}{2}\Tr\left(X^TX\right) + \nabla_X\frac{1}{2}\Tr\left(\Theta^T\Theta\right)\\
&=\nabla_X\frac{1}{2}\Tr\left(X^TX\Theta^T\Theta\right) - \nabla_X\Tr\left(X\Theta^TV^T\right) + \lambda X\\
&=X\Theta^T\Theta - V\Theta + \lambda X\\
\text{Therefore, }\nabla_X \mathcal{C}(X,\Theta)
&=\left(\hat{V} - V\right)\Theta + \lambda X.
\end{aligned}
\end{equation}
We use Properties 541, and 14 from \cite{petersen2008matrix} in line 2 of the derivation. We additionally use Properties 13, and 35 in line 3; Properties 16, 13, 115, and 33 in line 4; and, Properties 113, and 100 in line 5.

Next, we find the gradient of $\mathcal{C}(X,\Theta)$ with respect to $\Theta$ as:
\begin{equation}
\begin{aligned}
\label{gradtheta}
\nabla_\Theta \mathcal{C}(X,\Theta) 
&= \nabla_\Theta\frac{1}{2} \bigg[\norm{\hat{V}-V}^2_F +\lambda\left(\norm{X}^2_F+\norm{\Theta}^2_F\right)\bigg]\\
&= \nabla_\Theta\frac{1}{2} \bigg[\Tr\left(X\Theta^T-V\right)^T\left(X\Theta^T-V\right) \\
& \hspace{12pt} +\lambda \Tr\left(X^TX\right) + \lambda \Tr\left(\Theta^T\Theta\right) \bigg]\\
&=\nabla_\Theta\frac{1}{2}\Tr\left(\Theta X^TX\Theta^T\right) \\
& \hspace{12pt} - \nabla_\Theta\frac{1}{2} \times 2\Tr\left(\Theta X^TV\right) + \nabla_\Theta\frac{1}{2}\Tr\left(V^TV\right)\\
& \hspace{12pt} +\nabla_\Theta\frac{1}{2}\Tr\left(X^TX\right) + \nabla_\Theta\frac{1}{2}\Tr\left(\Theta^T\Theta\right)\\
&=\nabla_\Theta\frac{1}{2}\Tr\left(\Theta^T\Theta X^TX\right) - \nabla_\Theta \Tr\left(\Theta X^TV\right) + \lambda X\\
&=\Theta X^TX - V^TX + \lambda \Theta \hspace{4pt}\\
\text{Therefore, }\nabla_\Theta \mathcal{C}(X,\Theta)
&=\left(\hat{V} - V\right)^TX + \lambda \Theta.
\end{aligned}
\end{equation}
In this derivation, we use the same properties from \cite{petersen2008matrix} as mentioned before. Additionally, we use Properties 4, and 5 from \cite{petersen2008matrix} in line 6. We simply denote $\nabla_X \mathcal{C}(X,\Theta)$ and $\nabla_\Theta \mathcal{C}(X,\Theta)$ as $\nabla_X$ and $\nabla_\Theta$, respectively.
\par
Note that, the gradients (as shown in \eqref{gradx} and \eqref{gradtheta}), depend on private data (ratings in our case). As mentioned before, we aim to publish a differentially private user profile matrix that prevents an adversary from inferring individual user information. Therefore, we need to estimate $\nabla_\Theta$ at each iteration satisfying differential privacy. To that end, we employ the Gaussian mechanism~\cite{dwork2014algorithmic}.
\par
\noindent \textbf{Proposed Algorithm.} We initialize $X \in \mathbb{R}^{n_m \times n}$, and  $\Theta \in \mathbb{R}^{n_u \times n}$ with entries drawn i.i.d. from $\mathcal{N}(0, 1)$ and ensure that each profile vector has $\mathcal{L}_2$ norm 1 (that is, replace $x_i = \frac{x_i}{\|x_i\|_2}$ for $i\in \{1, \ldots, n_m\}$, and $\theta_j = \frac{\theta_j}{\|\theta_j\|_2}$ for $j\in \{1, \ldots, n_u\}$). We iterate the following steps until the algorithm converges: we compute the inferred rating matrix as $\hat{V} \gets X\Theta^T \odot R$. Next, we compute $\nabla_X$ according to \eqref{gradx}. We use the norm clipped $\Theta$ in \eqref{gradx} by computing, $\theta_j = \theta_j / \max\left(1, \frac{\norm{\theta_j}_2}{C}\right)$. Here, $C$ is the norm clipping parameter to ensure that $\|\theta_j\|_2 \leq C\ \forall j \in \{1, \ldots, n_u\}$. Similarly, we compute $\nabla_\Theta$ according to \eqref{gradtheta}, and as before, we use the norm clipped $X$ in \eqref{gradtheta} by computing, $x_i = x_i / \max\left(1, \frac{\norm{x_i}_2}{C}\right)$, which ensures $\|x_i\|_2 \leq C\ \forall i\in \{1, \ldots, n_m\}$. This operation is important because, we need to impose a bound on $\norm{x_i}_2$ in order to compute the sensitivity $\Delta_2\left(\nabla_\Theta\right)$ using \eqref{sensitivity}. More details are discussed in the proof of Theorem \ref{algorithmprivacy}. Next, we sample $\eta \in \mathbb{R}^{n_u \times n} \sim$ i.i.d $\mathcal{N}(0,\sigma^2)$, where $\sigma = \frac{\tau C}{\epsilon_i}\sqrt{2\log\frac{1.25}{\delta}}$. Here, $\tau$ is the range of ratings, $C$ is the norm clipping parameter, and $(\epsilon_i,\delta)$ are privacy parameters. We compute $\hat{\nabla}_\Theta = \nabla_\Theta + \eta$ in accordance with Definition \ref{definition3}. Finally, we update $X$, and $\Theta$ with $\nabla_X$, and $\hat{\nabla}_\Theta$ respectively, using step-size $\mu$. Upon completion of the iteration, we release differentially private user profile matrix $\Theta$. We summarize this proposed scheme in Algorithm \ref{algorithm}.
\begin{algorithm}[t]
\caption{Matrix Factorization Satisfying Differential Privacy}
\label{algorithm}
\begin{algorithmic}[1]
\Require Sparse rating matrix $V \in \mathbb{R}^{n_m \times n_u}$ with observation mask $R \in \mathbb{R}^{n_m \times n_u}$; number of movies $n_m$; number of users $n_u$; privacy parameters $\epsilon_i \in (0,1)$, $\delta \in (0,1)$, target $\delta_r \in (0,1)$; range of ratings $\tau$; norm clipping parameter $C$; and step-size $\mu$.
\State Randomly initialize $X \in \mathbb{R}^{n_m \times n}, \Theta \in \mathbb{R}^{n_u \times n}$, where $\norm{x_i}_2 = \norm{\theta_j}_2 = 1$
\While {(not converged)}

    \State $\hat{V} \gets X\Theta^T \odot R$
    \State $\nabla_X \gets \left(\hat{V} - V\right)\Theta + \lambda X, \text{ where } \theta_j \gets \theta_j / \max\left(1, \frac{\norm{\theta_j}_2}{C}\right)$
    \State $\nabla_\Theta \gets \left(\hat{V} - V\right)^TX + \lambda \Theta, \text{ where } x_i \gets x_i / \max\left(1, \frac{\norm{x_i}_2}{C}\right)$
    \State Sample $\eta \in \mathbb{R}^{n_u \times n} \sim$ i.i.d $\mathcal{N}(0,\sigma^2)$, where $\sigma = \frac{\tau C}{\epsilon_i}\sqrt{2\log\frac{1.25}{\delta}}$
    \State $\hat{\nabla}_\Theta \gets \nabla_\Theta + \eta$
    \State $X \gets X - \mu \times \nabla_X$
    \State $\Theta \gets \Theta - \mu \times \hat{\nabla}_\Theta$

\EndWhile\\
\Return Differentially private user profile matrix $\Theta$
\end{algorithmic}
\end{algorithm}
\begin{theorem}[Privacy of Algorithm \ref{algorithm}]
\label{algorithmprivacy}
Consider Algorithm \ref{algorithm} in setting of the matrix factorization problem described in Section \ref{sec:MF} with $n_u$ user ratings and $n_m$ movies. Let the matrix $\Theta \in \mathbb{R}^{n_u \times n}$ encompass the user profile vectors and the matrix $X \in \mathbb{R}^{n_m \times n}$ encompass the movie (or item) profile vectors. Suppose that the range of ratings is $\tau$ and norm clipping parameter is $C$. Then Algorithm \ref{algorithm} guarantees $(\epsilon_\mathrm{opt}, \delta_r)$-DP for the released user profile matrix $\Theta$ for any $\delta_r \in (0,1)$, where $\epsilon_\mathrm{opt}$ is given by:
\begin{equation*}
\epsilon_\mathrm{opt} = \frac{J{\epsilon_i}^2}{4\log\frac{1.25}{\delta}} + 2\sqrt{\frac{J{\epsilon_i}^2 \log\frac{1}{\delta_r}}{4 \log\frac{1.25}{\delta}}}.
\end{equation*}
\end{theorem}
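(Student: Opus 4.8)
\emph{Proof proposal.} The plan is to express the privacy of Algorithm \ref{algorithm} through R\'enyi Differential Privacy, compose across iterations, and then translate back to $(\epsilon,\delta)$-DP while optimizing over the R\'enyi order $\alpha$. Concretely, I would (i) bound the $\mathcal{L}_2$ sensitivity of the gradient $\nabla_\Theta$ that is perturbed at each step, (ii) invoke Proposition \ref{prop3} to obtain a per-iteration RDP guarantee for the Gaussian mechanism of Step~7, (iii) invoke Proposition \ref{prop2} (equivalently the $J$-fold composition bound in Proposition \ref{prop3}) to compose over the $J$ iterations, (iv) use post-processing and Proposition \ref{prop1} to convert the RDP guarantee on the released $\Theta$ into $(\epsilon,\delta_r)$-DP, and (v) minimize the resulting $\epsilon$ over $\alpha>1$ to obtain $\epsilon_{\mathrm{opt}}$.

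For step (i), fix an iteration and two rating matrices $V,V'$ differing in a single entry $v_{ij}$ with $|v_{ij}-v'_{ij}|\le\tau$. Since $\nabla_\Theta = (\hat V - V)^T X + \lambda\Theta$ and only the $(i,j)$ entry of $\hat V - V = X\Theta^T\odot R - V$ changes while the term $\lambda\Theta$ is common to both, one gets $\nabla_\Theta(V) - \nabla_\Theta(V') = -(v_{ij}-v'_{ij})\,e_j x_i$, whose Frobenius norm is $|v_{ij}-v'_{ij}|\,\norm{x_i}_2 \le \tau C$, where the bound $\norm{x_i}_2 \le C$ is exactly what the norm-clipping step $x_i \gets x_i/\max(1,\norm{x_i}_2/C)$ enforces. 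Hence $\Delta_2(\nabla_\Theta)\le \tau C$, matching the choice $\sigma=\frac{\tau C}{\epsilon_i}\sqrt{2\log\frac{1.25}{\delta}}$ in Step~6; by Theorem \ref{theoremGuassian} each step is $(\epsilon_i,\delta)$-DP, but for a tight composition I would instead normalize the sensitivity to $1$ and apply Proposition \ref{prop3}. Dividing $\nabla_\Theta$ by $\tau C$ turns Step~7 into a Gaussian mechanism with noise $\mathcal{N}(0,\sigma^2/(\tau C)^2)$ on a sensitivity-$1$ query, so it is $\bigl(\alpha,\frac{\alpha(\tau C)^2}{2\sigma^2}\bigr)$-RDP, and since $(\tau C)^2/\sigma^2 = \epsilon_i^2/(2\log\frac{1.25}{\delta})$ this equals $\bigl(\alpha,\frac{\alpha\epsilon_i^2}{4\log(1.25/\delta)}\bigr)$-RDP.

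For steps (iii)–(iv), composing the $J$ iterations via Proposition \ref{prop2} shows the full sequence of noisy gradients $\hat\nabla_\Theta$ is $\bigl(\alpha,\frac{\alpha J\epsilon_i^2}{4\log(1.25/\delta)}\bigr)$-RDP; since the released $\Theta$ is a deterministic function of that sequence (the update Steps~8--9), it inherits the same RDP guarantee by post-processing. Proposition \ref{prop1} then yields that $\Theta$ is $\bigl(\frac{\alpha J\epsilon_i^2}{4\log(1.25/\delta)} + \frac{\log(1/\delta_r)}{\alpha-1},\,\delta_r\bigr)$-DP for every $\alpha>1$. For step (v), write $a:=\frac{J\epsilon_i^2}{4\log(1.25/\delta)}$ and $b:=\log\frac{1}{\delta_r}$ and set $\beta:=\alpha-1>0$; the $\epsilon$-bound becomes $a+a\beta+b/\beta$, minimized at $\beta=\sqrt{b/a}$ with value $a+2\sqrt{ab}$, which is exactly the claimed $\epsilon_{\mathrm{opt}}=\frac{J\epsilon_i^2}{4\log(1.25/\delta)} + 2\sqrt{\frac{J\epsilon_i^2\log(1/\delta_r)}{4\log(1.25/\delta)}}$.

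The \emph{main obstacle} is step (i): the gradient $\nabla_\Theta$ at a given iteration depends on the current iterates $X,\Theta$, which in turn were formed from earlier data-dependent updates, so the sensitivity bound must be argued to hold for the \emph{realized} iterates at each step rather than in isolation. The clean way to do this is the adaptive-composition viewpoint used for DP-SGD: condition on the transcript produced so far, treat Steps~8--9 as post-processing of the noisy gradients, and observe that the per-step bound $\Delta_2(\nabla_\Theta)\le\tau C$ then holds pointwise thanks to the clipping $\norm{x_i}_2\le C$ — this is precisely why the clipping step is indispensable. Everything after that is a mechanical application of Propositions \ref{prop1}--\ref{prop3} and a one-variable optimization.
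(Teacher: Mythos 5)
Your proposal follows essentially the same route as the paper's proof: bound $\Delta_2(\nabla_\Theta)\le\tau C$ via the single-entry difference and the clipping $\norm{x_i}_2\le C$, apply Proposition \ref{prop3} per iteration, compose over $J$ steps with Proposition \ref{prop2}, convert to $(\epsilon,\delta_r)$-DP via Proposition \ref{prop1}, and minimize over $\alpha$ (your substitution $\beta=\alpha-1$ gives the same optimum as the paper's derivative calculation), arriving at the identical $\epsilon_\mathrm{opt}$. Your explicit remark about the adaptive-composition subtlety for data-dependent iterates is a point the paper passes over silently, but it does not change the argument's structure.
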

\begin{proof}
The proof of Theorem \ref{algorithmprivacy} follows from the application of the Gaussian mechanism \cite{dwork2014algorithmic} relating to Definition \ref{definition3} and Theorem \ref{theoremGuassian}, the post processing property \cite{dwork2014algorithmic} and the bound on  the $\mathcal{L}_2$ sensitivity of $\nabla_\Theta$. 
\par
Assuming a rating matrix $V^\prime$ that differs from \textit{V} by a single rating, we analytically find the expression for $\mathcal{L}_2$ sensitivity as: 
\begin{equation}
\label{sensitivity}
\begin{aligned}
\Delta_2\left(\nabla_\Theta\right)
&=\underset{V,V^\prime}{\max}\norm{\nabla_\Theta - {\nabla_\Theta}^\prime}_2\\
&=\underset{V,V^\prime}{\max} \lVert \left(X\Theta^T - V\right)^TX + \lambda \Theta \\& \hspace{12pt} - \left(X\Theta^T - V^\prime\right)^TX-\lambda \Theta \rVert_2\\
&=\underset{V,V^\prime}{\max}\norm{\Theta X^TX - V^TX - \Theta X^TX + {V^\prime}^TX}_2\\
&=\underset{V,V^\prime}{\max}\norm{\left(V^\prime - V\right)^TX}_2\\
&=\underset{V,V^\prime}{\max}\norm{
\begin{bmatrix}
0 &  \cdots & 0 \\
\vdots &  \ddots & \vdots \\
\left(v^\prime-v\right)x_{1,i} &  \ddots & \left(v^\prime-v\right)x_{n,i} \\
\vdots &  \ddots & \vdots \\
0 &  \cdots & 0 \\
\end{bmatrix}
}_2\\
&=\underset{V,V^\prime}{\max}\norm{\left(v^\prime-v\right)
\begin{bmatrix}
\horzbar & 0 & \horzbar \\
 & \vdots & \\
\horzbar & x_i & \horzbar \\
 & \vdots & \\
\horzbar & 0 & \horzbar \\
\end{bmatrix}
}_2\\
&= \underset{V,V^\prime}{\max} \hspace{2pt} |v^\prime - v|\norm{
\begin{bmatrix}
\horzbar & 0 & \horzbar \\
 & \vdots & \\
\horzbar & x_i & \horzbar \\
 & \vdots & \\
\horzbar & 0 & \horzbar \\
\end{bmatrix}
}_2 \\
\Delta_2\left(\nabla_\Theta\right) 
&= \underset{V,V^\prime}{\max} \hspace{2pt} |v^\prime - v|\norm{
x_i
}_2.
\end{aligned}
\end{equation}
\newline
We use Property 531 from \cite{petersen2008matrix} in line 7. Considering $\norm{x_i}_2 \le C$, and as $|v^\prime - v| \le \tau$, where $\tau$ is the range of ratings, we have $\Delta_2\left(\nabla_\Theta\right) = \tau C$, and the noise standard deviation in Line 6 of Algorithm \ref{algorithm} is $\sigma = \frac{\tau C}{\epsilon_i}\sqrt{2\log\frac{1.25}{\delta}}$. Recalling Definition \ref{definition3}, we can assert that the computation of $\nabla_\Theta$ at each step is at least $(\epsilon_i, \delta)$ differentially private. Therefore, the updated $\Theta$ at the end of the iterations is also differentially private. In the following, we show how to calculate the overall privacy spent using the propositions of RDP.

\noindent\textbf{Computing the Overall Privacy Risk.} Recall from Proposition ~\ref{prop3} stated in Section \ref{subsec:RDP} that a Gaussian mechanism, where the function has $\mathcal{L}_2$ sensitivity $\Delta_2\left(\nabla_\Theta\right)$ and which adds noise from $\mathcal{N}(0, \sigma^2)$ satisfies $(\alpha, \frac{\alpha {\Delta_2\left(\nabla_\Theta\right)}^2}{2\sigma^2})$-RDP. If an algorithm requires $J$ steps to converge, each of which satisfies $(\alpha, \frac{\alpha {\Delta_2\left(\nabla_\Theta\right)}^2}{2\sigma^2})$-RDP, then from Proposition~\ref{prop2}, the overall algorithm satisfies $(\alpha, \frac{\alpha J {\Delta_2\left(\nabla_\Theta\right)}^2}{2\sigma^2})$-RDP. We want to find the smallest value of $Overall \hspace{3pt} \epsilon$ (denoted as $\epsilon_\mathrm{opt}$ in Theorem~\ref{algorithmprivacy}), given a noise variance $\sigma^2$ and a target value of delta, $\delta_r$. Using Proposition~\ref{prop1} stated in Section~\ref{subsec:RDP}, we have that the overall algorithm is $(\epsilon, \delta_r)$-differentially private for any $\delta_r \in (0,1)$, where $\epsilon$ is given by
\begin{equation}
\label{overall_epsilon}
    \epsilon = \frac{\alpha J {\Delta_2\left(\nabla_\Theta\right)}^2}{2\sigma^2} + \frac{\log\frac{1}{\delta_r}}{\alpha-1}.
\end{equation}
To find the smallest $\epsilon$, we analytically to find $\alpha^*$ as,
\begin{align*}
\frac{\partial \epsilon}{\partial \alpha}
& = 0\\
\frac{J{\Delta_2\left(\nabla_\Theta\right)}^2}{2\sigma^2} - \frac{\log \frac{1}{\delta_r}}{(\alpha^* - 1)^2}
& = 0\\
\frac{J{\Delta_2\left(\nabla_\Theta\right)}^2}{2\sigma^2}
& = \frac{\log \frac{1}{\delta_r}}{(\alpha^* - 1)^2}\\
(\alpha^* - 1)^2
& = \frac{2\delta^2 \log\frac{1}{\delta_r}}{J{\Delta_2\left(\nabla_\Theta\right)}^2}\\
\alpha^*
& = 1 \pm \sqrt{\frac{2\sigma^2}{J{\Delta_2\left(\nabla_\Theta\right)}^2}\log\frac{1}{\delta_r}}.
\end{align*}
According to proposition \ref{prop1}, $\alpha > 1$. Therefore, we have
\begin{equation}
\label{alpha*}
\alpha^* = 1 + \sqrt{\frac{2\sigma^2}{J{\Delta_2\left(\nabla_\Theta\right)}^2}\log\frac{1}{\delta_r}}.
\end{equation}
Recall that for our setting, $\frac{\sigma}{\Delta_2\left(\nabla_\Theta\right)} = \frac{1}{\epsilon_i}\sqrt{2\log \frac{1.25}{\delta}}$. By substituting this in \eqref{alpha*}, and then replacing $\alpha$ in \eqref{overall_epsilon} by $\alpha^*$, we find the expression for the best (smallest) $\epsilon$ for a given target $\delta_r$ as:
\begin{equation}
\label{epsilon-opt}
\epsilon_\mathrm{opt} = \frac{J{\epsilon_i}^2}{4\log\frac{1.25}{\delta}} + 2\sqrt{\frac{J{\epsilon_i}^2 \log\frac{1}{\delta_r}}{4 \log\frac{1.25}{\delta}}}, \text{ for } \frac{J{\epsilon_i}^2 \log\frac{1}{\delta_r}}{4 \log\frac{1.25}{\delta}} > 0.
\end{equation}
\end{proof}

\section{Experimental Results and Observation}
\label{sec:exp}
We implement our algorithm on three datasets composed of real-world rating data. The first is the Movielens 1M \cite{movielens} dataset consisting of 1 million ratings of 3706 movies by 6040 users. This dataset has a rating density of 4.47\%. The second dataset is a slice taken from the Netflix Prize Data \cite{netflix} having 5.36 million ratings of 5466 movies by 11345 users with a rating density of 8.65\%. The third is another slice taken from the Anime Recommendations Database (later referred as AnimeReco) \cite{animereco} comprising 1.54 million ratings of 2772 anime by 4623 users with a rating density of 12.03\%. Note that the Netflix and AnimeReco dataset slices were conditioned to have users with ratings over 200 and under 2500 and movies with ratings over 100. The ratings in the Movielens and Netflix datasets are integers that range from 1 to 5, whereas the AnimeReco dataset has integer ratings ranging from 1 to 10.
\begin{figure*}[t]
  \centering
  \begin{minipage}[t]{0.42\textwidth}
    \includegraphics[width=\textwidth]{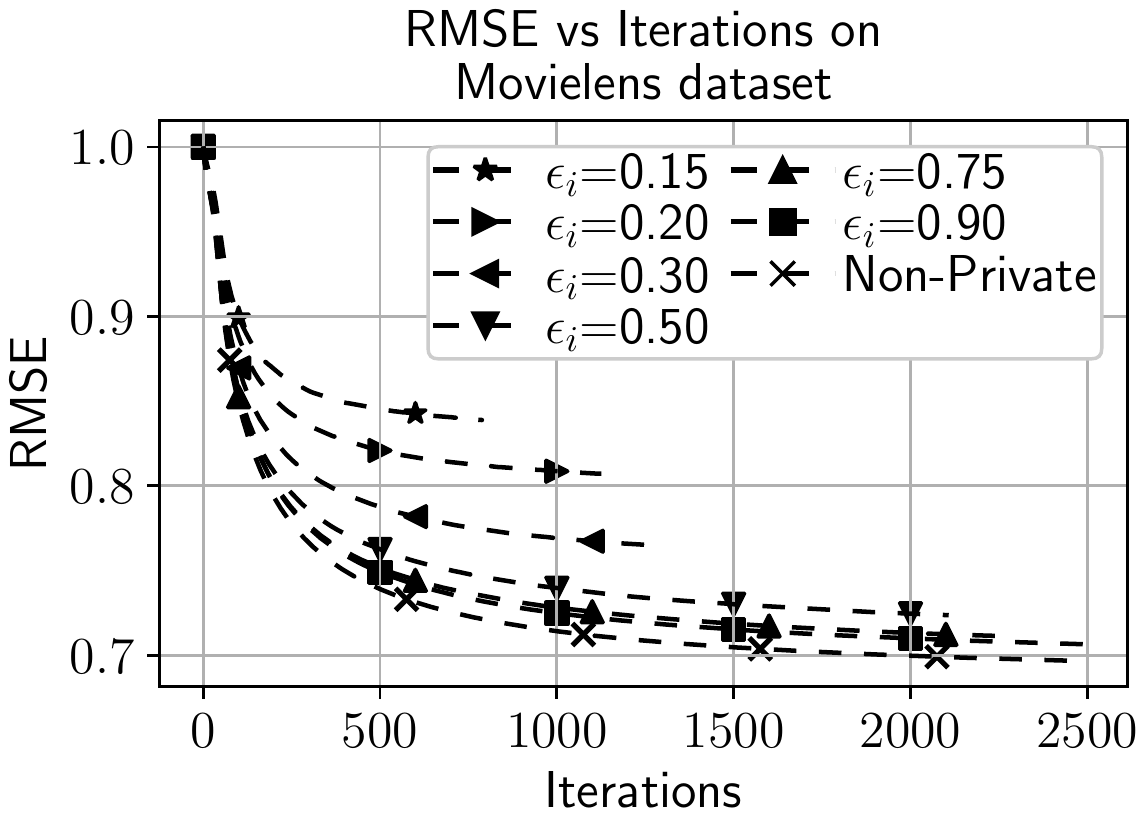}
    \caption{RMSE vs Iterations on Movielens dataset}
    \label{fig:1}
  \end{minipage}
  \hfill
  \begin{minipage}[t]{0.275\textwidth}
    \includegraphics[width=\textwidth]{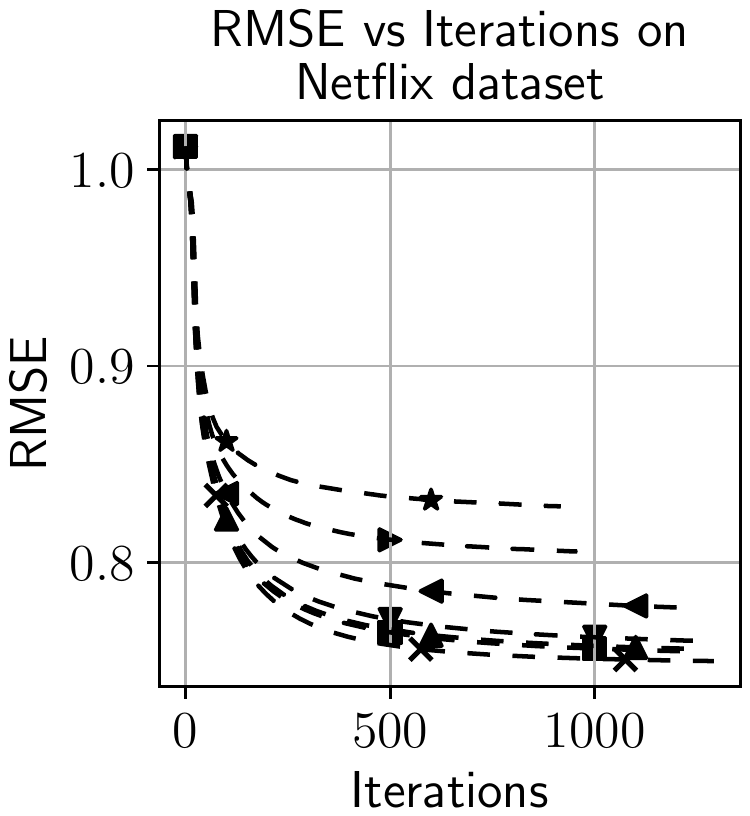}
    \caption{RMSE vs Iterations on Netflix dataset}
    \label{fig:2}
  \end{minipage}
  \hfill
  \begin{minipage}[t]{0.275\textwidth}
    \includegraphics[width=\textwidth]{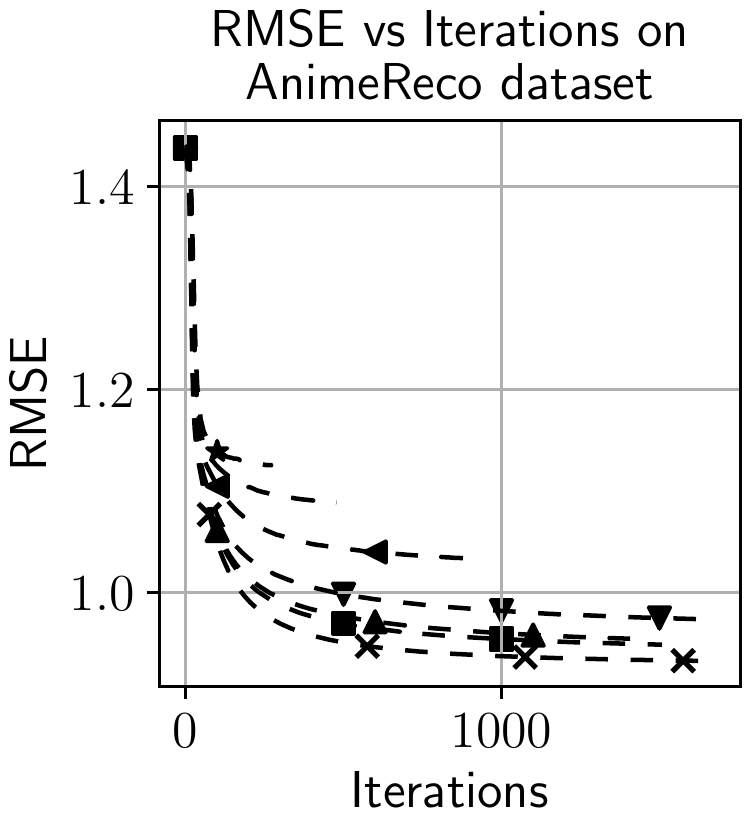}
    \caption{RMSE vs Iterations on AnimeReco dataset}
    \label{fig:3}
  \end{minipage}
\end{figure*}

\begin{figure*}[t]
  \centering
  \begin{minipage}[t]{0.325\textwidth}
    \includegraphics[width=\textwidth]{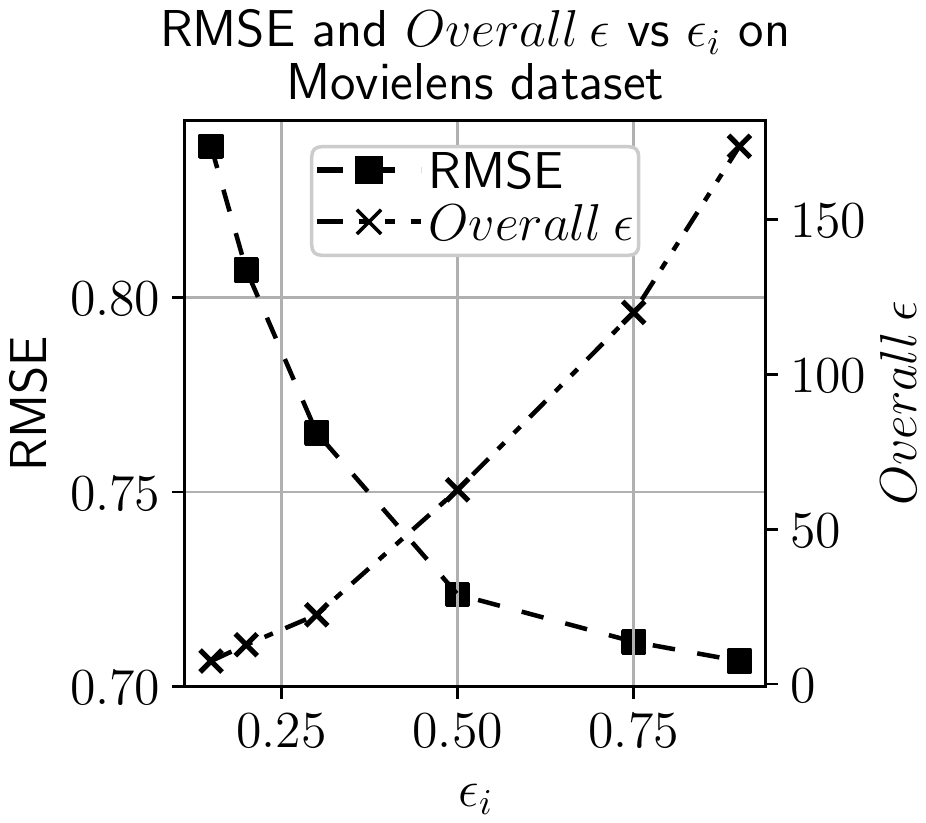}
    \caption{RMSE and $Overall \hspace{3pt} \protect \epsilon$ vs $\protect \epsilon_i$ on Movielens dataset}
    \label{fig:4}
  \end{minipage}
  \hfill
  \begin{minipage}[t]{0.325\textwidth}
    \includegraphics[width=\textwidth]{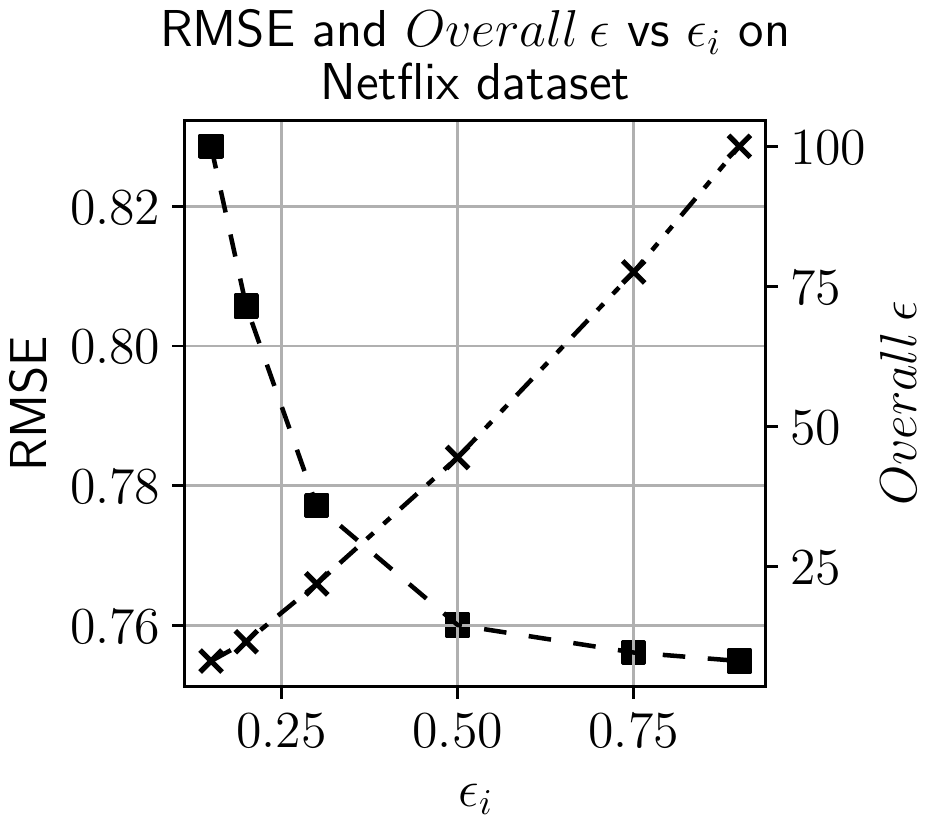}
    \caption{RMSE and $Overall \hspace{3pt} \protect \epsilon$ vs $\protect \epsilon_i$ on Netflix dataset}
    \label{fig:5}
  \end{minipage}
  \hfill
  \begin{minipage}[t]{0.325\textwidth}
    \includegraphics[width=\textwidth]{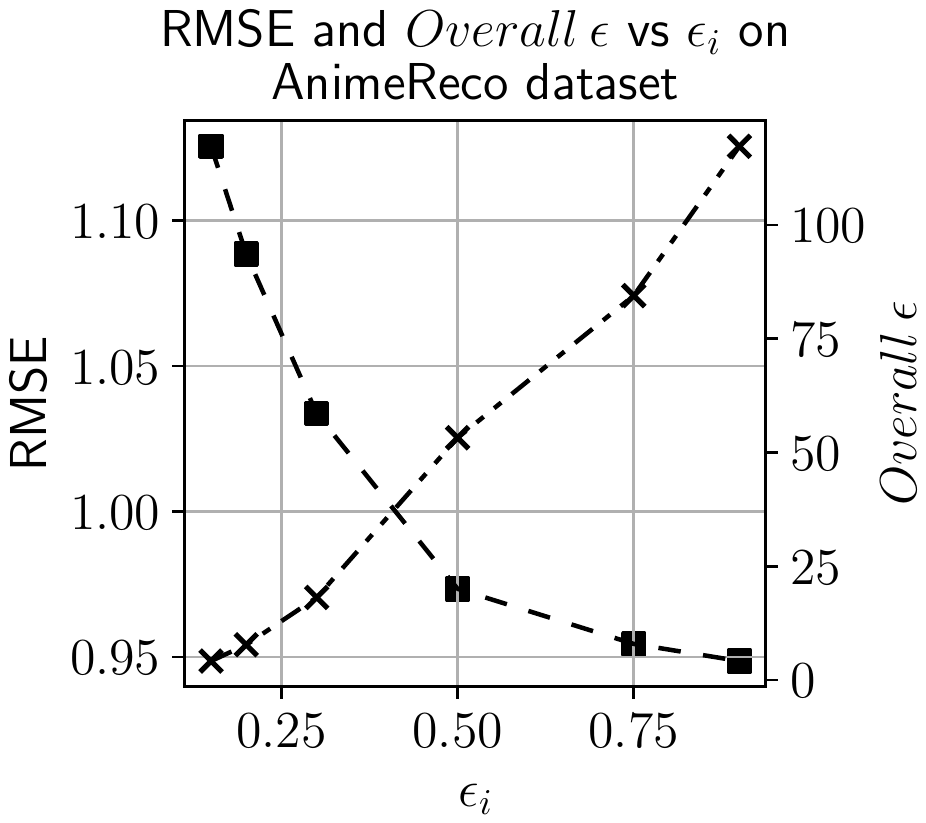}
    \caption{RMSE and $Overall \hspace{3pt} \protect \epsilon$ vs $\protect \epsilon_i$ on AnimeReco dataset}
    \label{fig:6}
  \end{minipage}
\end{figure*}
\par
All of the following experiments are conducted using Google's Colab Pro service which provides a Tesla P100 GPU with 16280 MiB of memory and 12.68 GB of RAM. For our experiments, we use $\sigma = \frac{\tau C}{\epsilon_i}\sqrt{2\log\frac{1.25}{\delta}}$. Here $\sigma$ is the parameter of the Gaussian mechanism, and $(\epsilon_i, \delta)$ are the privacy parameters per iteration. $C$ is the norm clipping parameter. We use $C = 1$ throughout our experiments. $Overall \hspace{3pt} \epsilon$, and $\delta_r$ are the overall privacy risk and target delta of the algorithm, respectively. Recall that $Overall \hspace{3pt} \epsilon$ is denoted as $\epsilon_\mathrm{opt}$ in Theorem \ref{algorithmprivacy}. \emph{Target} $\delta_r$ is the probability that the algorithm fails to satisfy the privacy definition. We use Root Mean Square Error (RMSE) as an indicator of utility for our algorithm. We investigate the impact of differential privacy on the utility of the recommender system. We observe the learning curve from the RMSE vs. Iteration plots of figures \ref{fig:1} -- \ref{fig:3}. For low values of $\epsilon_i$, for example, $\epsilon_i = 0.15$, more privacy is maintained in exchange for more RMSE. We keep all other parameters unchanged, and with increased values of $\epsilon_i$, the learning curves move closer to the non-private learning curve. We can achieve very close utility compared to the non-private utility. Step-size $\mu = 0.0005$ and profile vector dimension $n = 20$ are used for this experiment. Note that, for all the experiments discussed in this paper, $\delta = 0.01$, and $\delta_r = 0.00001$ are used. The RMSE curve approximates the non-private RMSE curve even better for $\epsilon_i \geq 0.5$ values. However, at this stage, the decrease in error is not worth compared to the substantially increasing overall privacy risk. We evidently observe behavior suggesting trade-off between utility, and privacy risk from figures \ref{fig:4} -- \ref{fig:6}. We notice an optimal trade-off between the utility and overall privacy risk (at around $\epsilon_i = 0.4$). This implicitly means that we can implement an accurate enough recommender system, which satisfies strict privacy guarantees while providing utility very close to that of a non-private recommender. 
\begin{figure*}[t]
  \centering
  \begin{minipage}[t]{0.42\textwidth}
    \includegraphics[width=\textwidth]{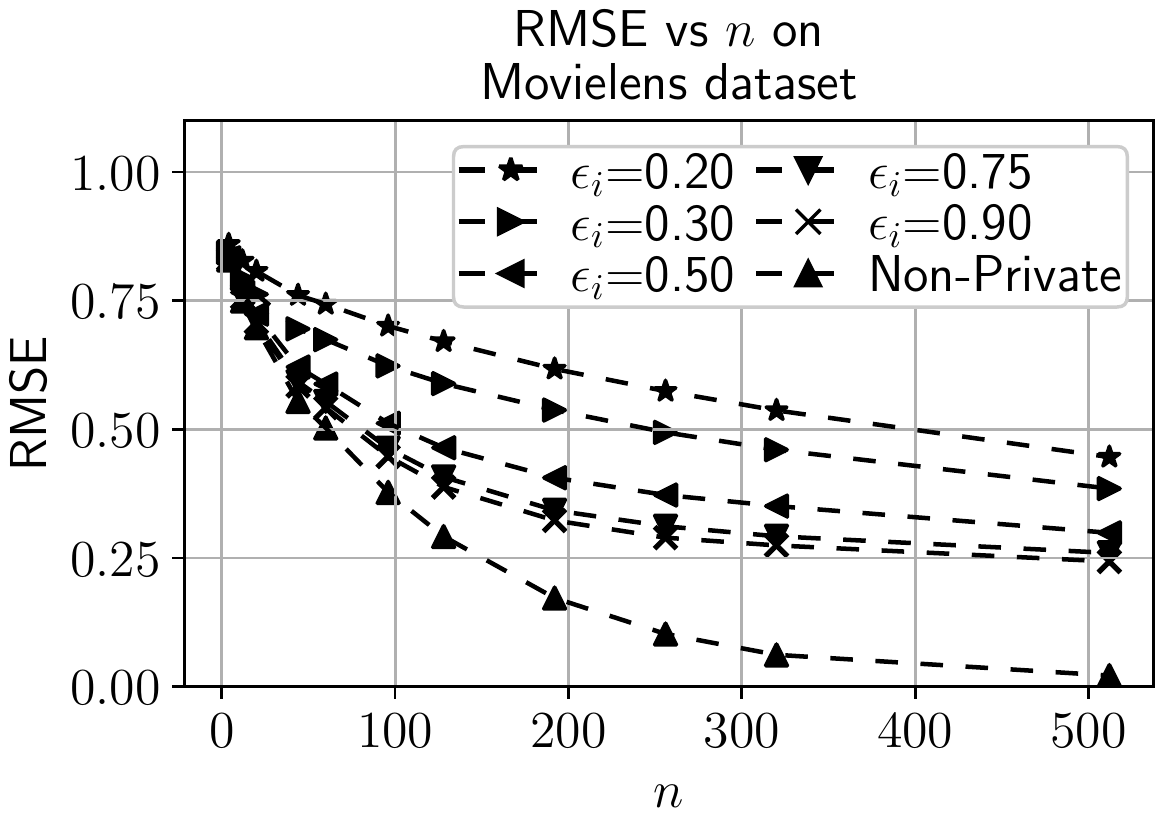}
    \caption{RMSE vs $\protect n$ on Movielens dataset}
    \label{fig:7}
  \end{minipage}
  \hfill
  \begin{minipage}[t]{0.27\textwidth}
    \includegraphics[width=\textwidth]{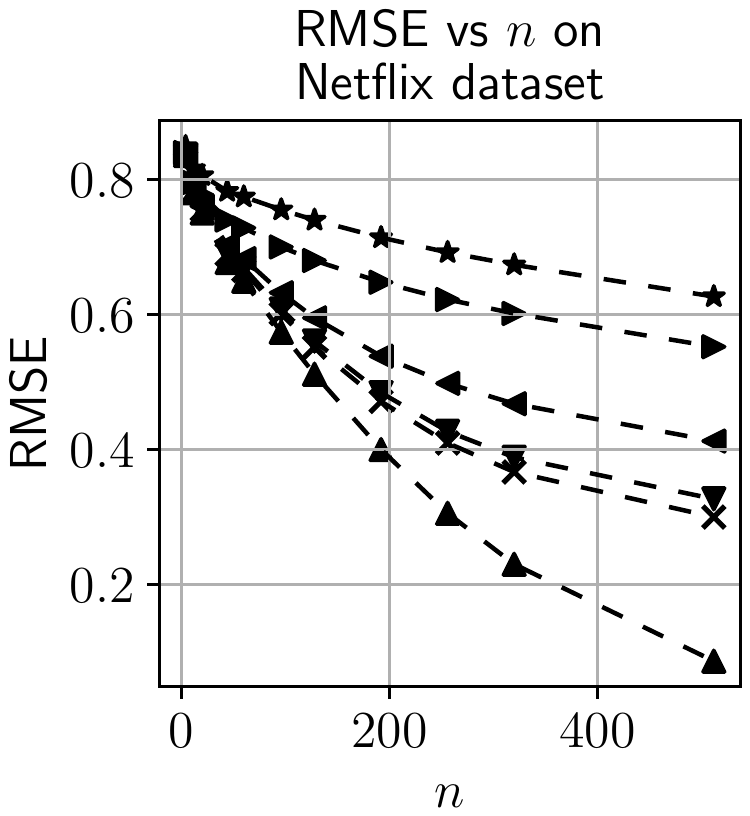}
    \caption{RMSE vs $\protect n$ on Netflix dataset}
    \label{fig:8}
  \end{minipage}
  \hfill
  \begin{minipage}[t]{0.275\textwidth}
    \includegraphics[width=\textwidth]{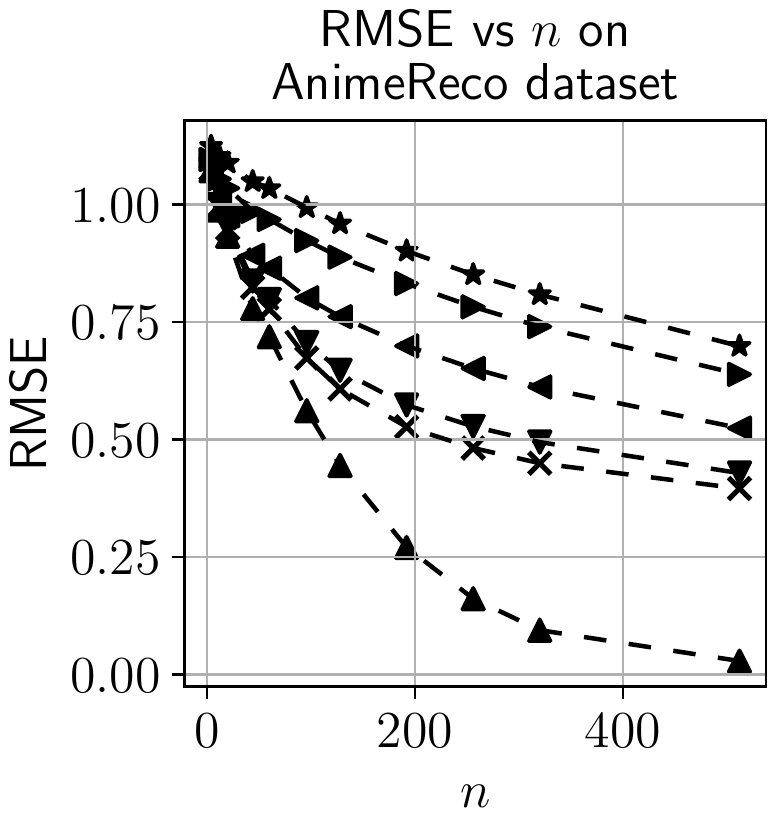}
    \caption{RMSE vs $\protect n$ on AnimeReco dataset}
    \label{fig:9}
  \end{minipage}
\end{figure*}

\begin{figure*}[t]
  \centering
  \begin{minipage}[t]{0.42\textwidth}
    \includegraphics[width=\textwidth]{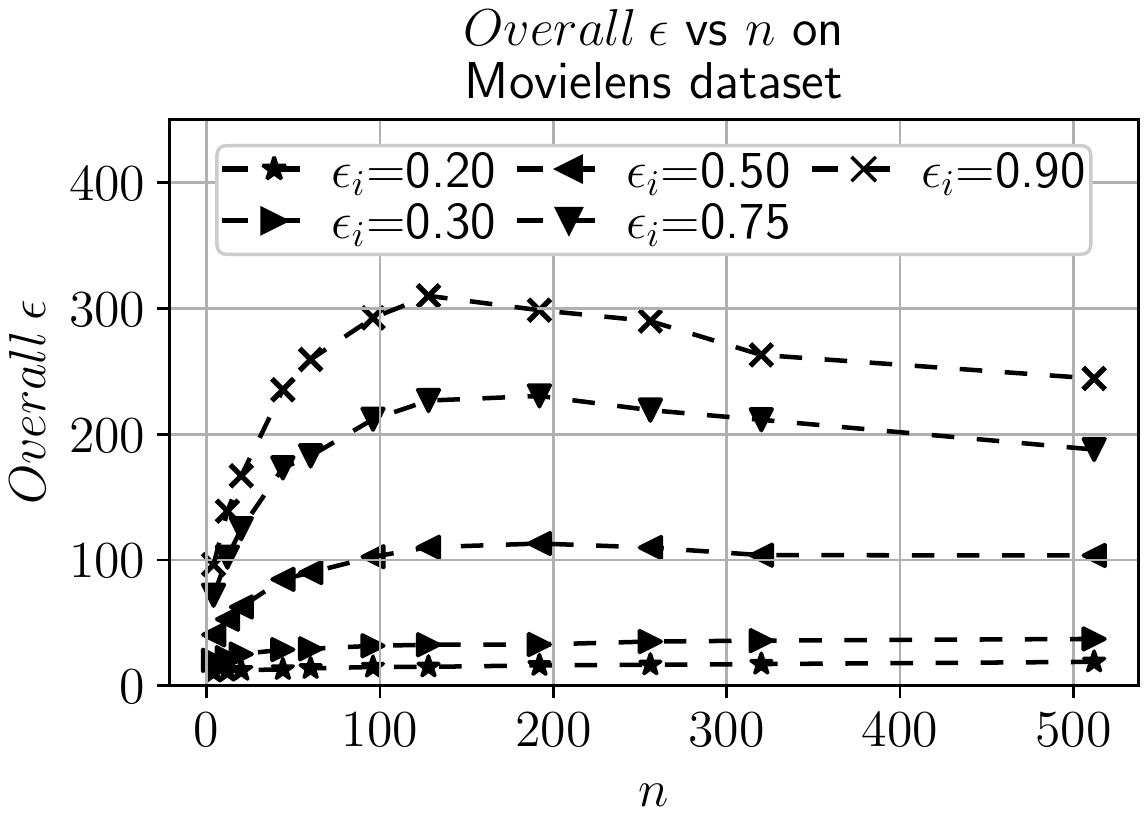}
    \caption{$Overall \hspace{3pt} \protect \epsilon$ vs $\protect n$ on Movielens dataset}
    \label{fig:10}
  \end{minipage}
  \hfill
  \begin{minipage}[t]{0.28\textwidth}
    \includegraphics[width=\textwidth]{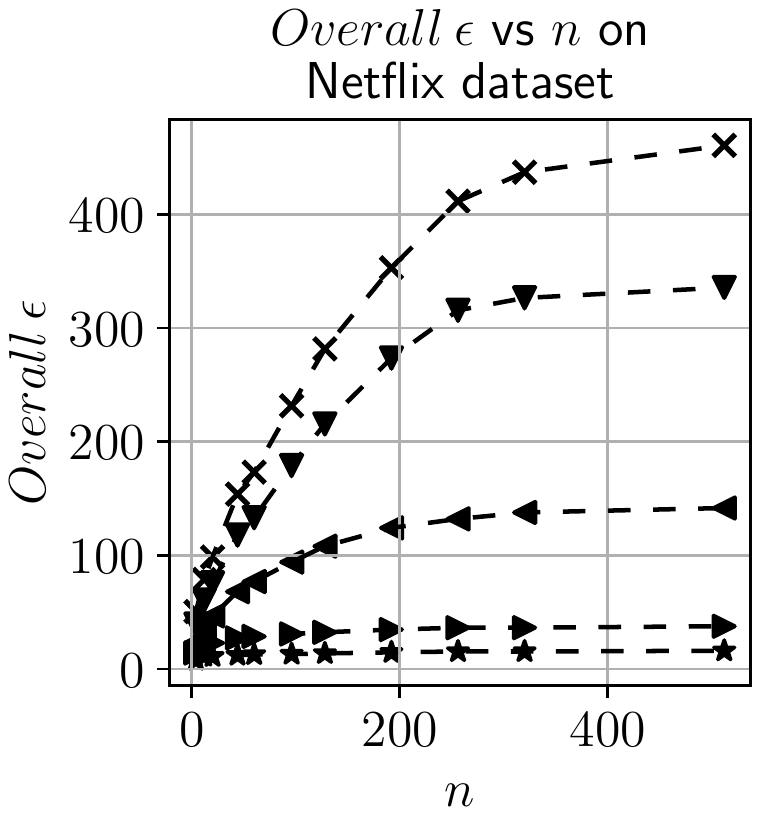}
    \caption{$Overall \hspace{3pt} \protect \epsilon$ vs $\protect n$ on Netflix dataset}
    \label{fig:11}
  \end{minipage}
  \hfill
  \begin{minipage}[t]{0.28\textwidth}
    \includegraphics[width=\textwidth]{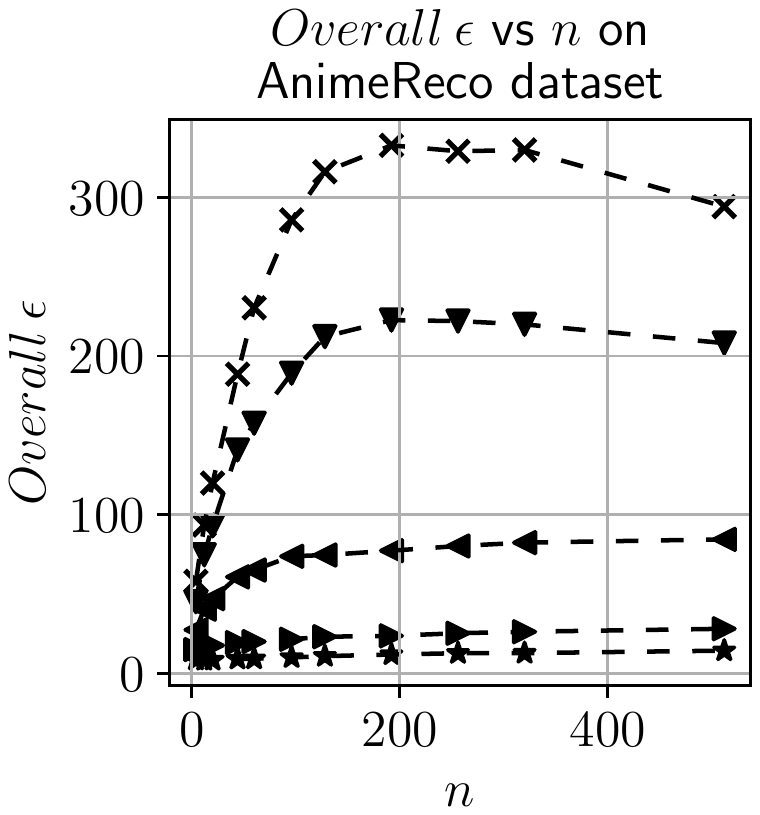}
    \caption{$Overall \hspace{3pt} \protect \epsilon$ vs $\protect n$ on AnimeReco dataset}
    \label{fig:12}
  \end{minipage}
\end{figure*}

\begin{figure*}[t]
  \centering
  \begin{minipage}[t]{0.41\textwidth}
    \includegraphics[width=\textwidth]{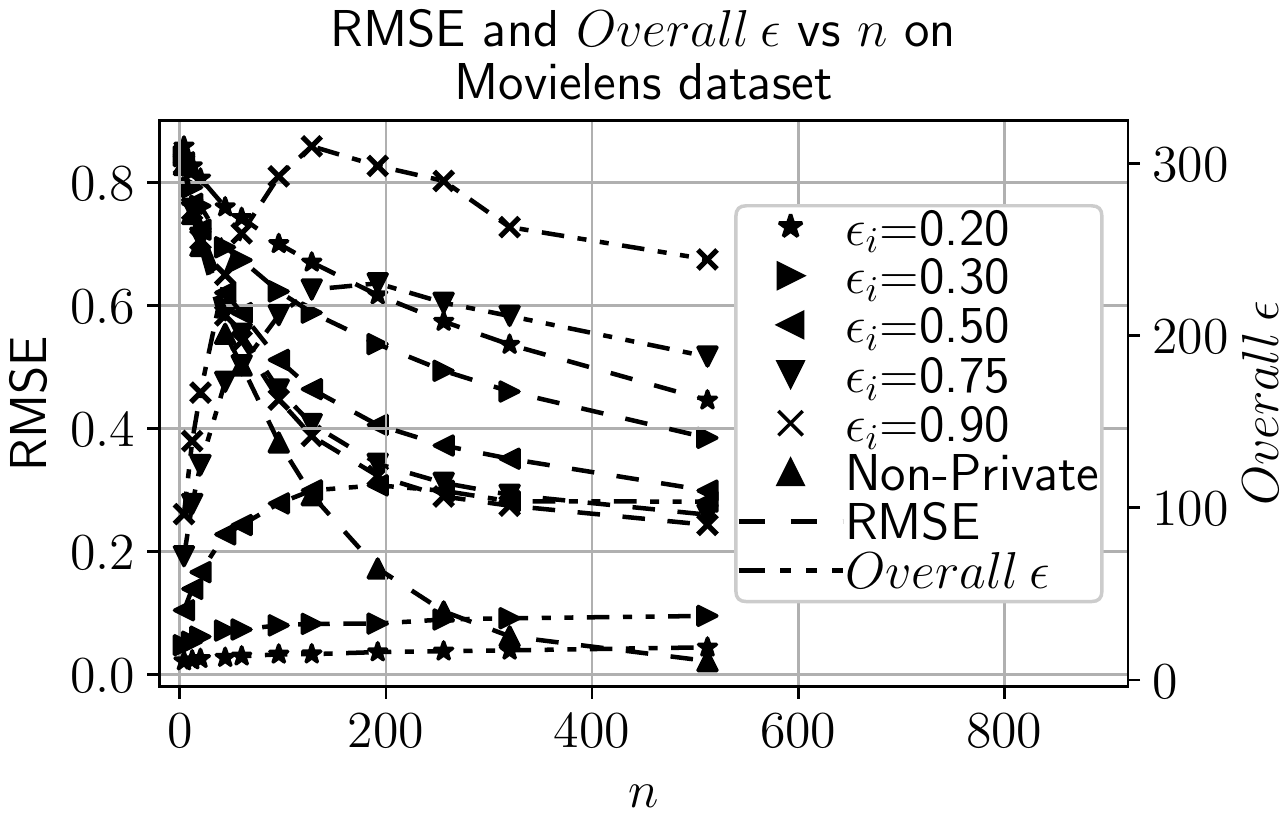}
    \caption{RMSE and $Overall \hspace{3pt} \protect \epsilon$ vs $\protect n$ on Movielens dataset}
    \label{fig:13}
  \end{minipage}
  \hfill
  \begin{minipage}[t]{0.283\textwidth}
    \includegraphics[width=\textwidth]{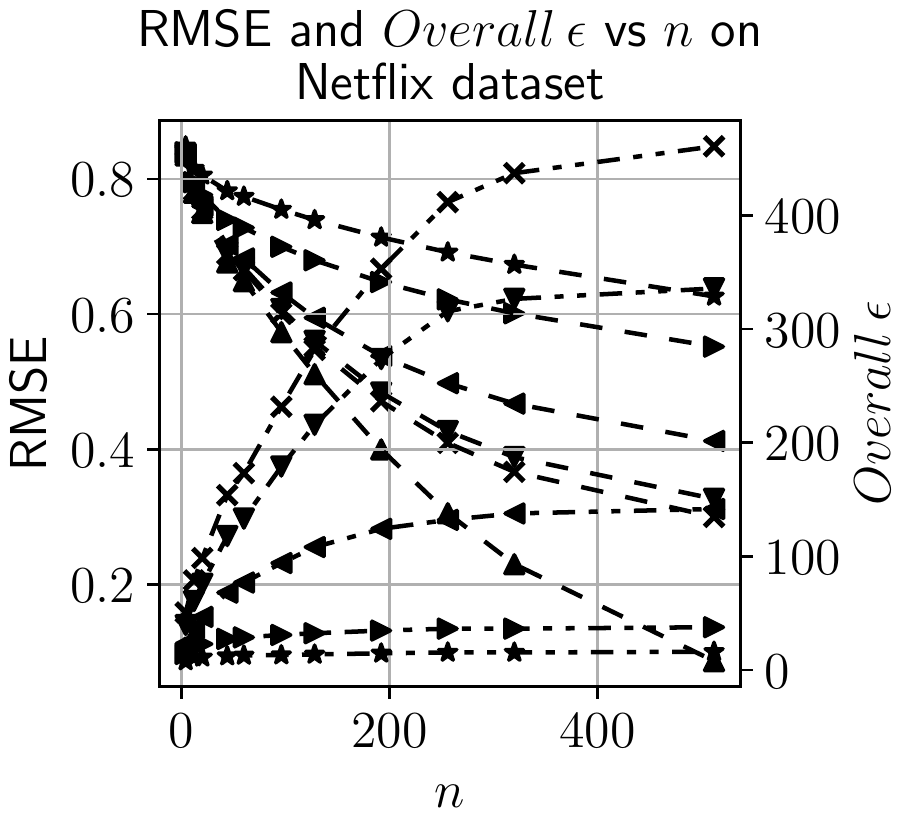}
    \caption{RMSE and $Overall \hspace{3pt} \protect \epsilon$ vs $\protect n$ on Netflix dataset}
    \label{fig:14}
  \end{minipage}
  \hfill
  \begin{minipage}[t]{0.287\textwidth}
    \includegraphics[width=\textwidth]{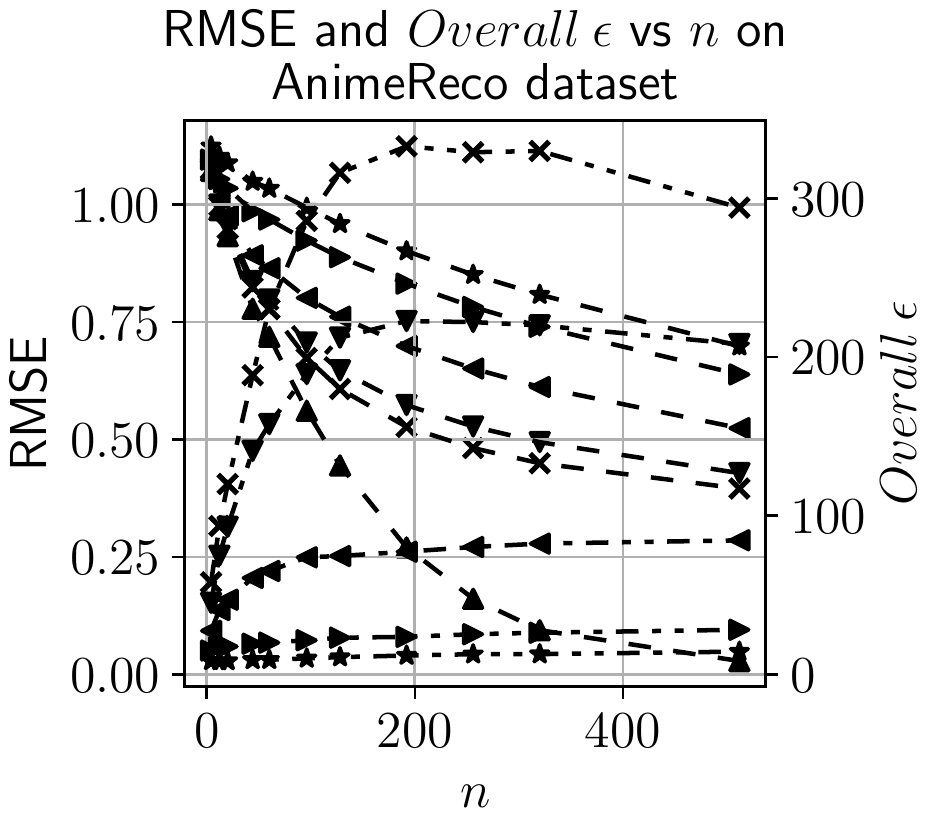}
    \caption{RMSE and $Overall \hspace{3pt} \protect \epsilon$ vs $\protect n$ on AnimeReco dataset}
    \label{fig:15}
  \end{minipage}
\end{figure*}

\begin{figure*}[t]
  \centering
  \begin{minipage}[t]{0.325\textwidth}
    \includegraphics[width=\textwidth]{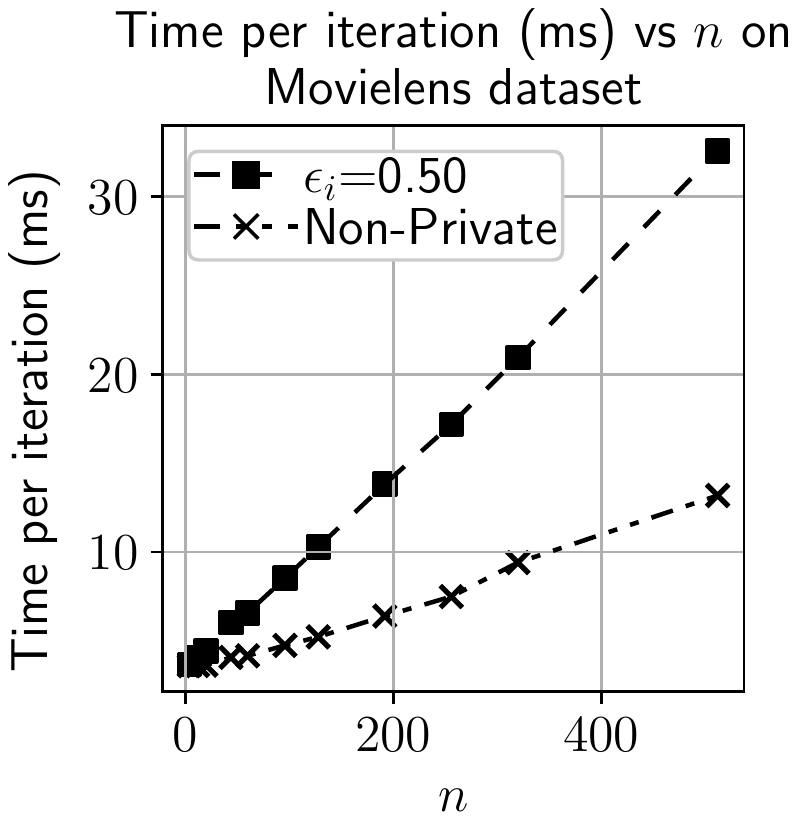}
    \caption{Time per iteration (ms) vs $\protect n$ on Movielens dataset}
    \label{fig:16}
  \end{minipage}
  \hfill
  \begin{minipage}[t]{0.325\textwidth}
    \includegraphics[width=\textwidth]{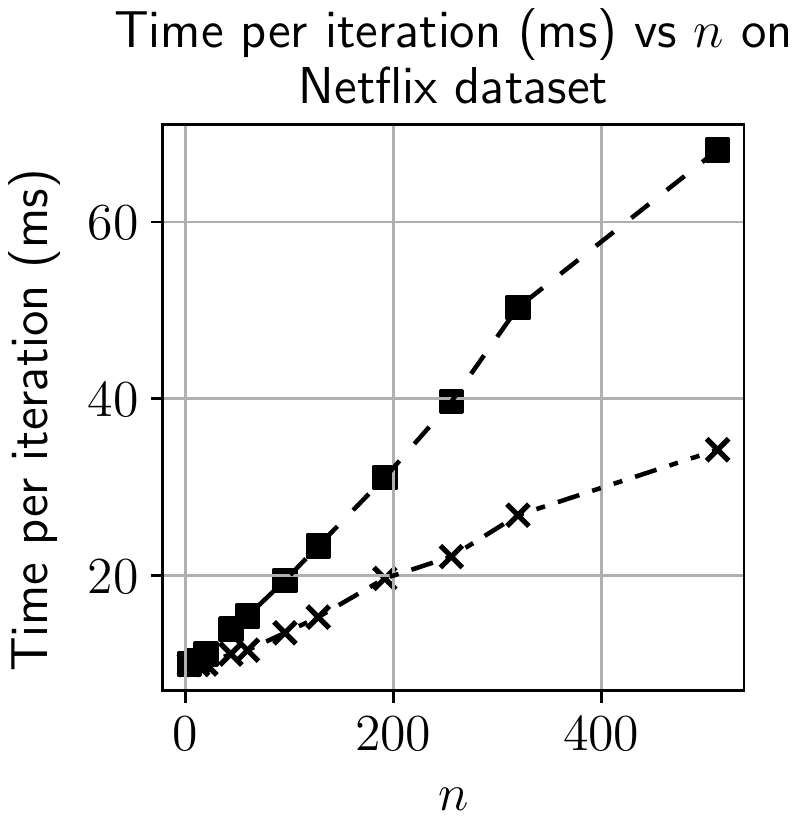}
    \caption{Time per iteration (ms) vs $\protect n$ on Netflix dataset}
    \label{fig:17}
  \end{minipage}
  \hfill
  \begin{minipage}[t]{0.325\textwidth}
    \includegraphics[width=\textwidth]{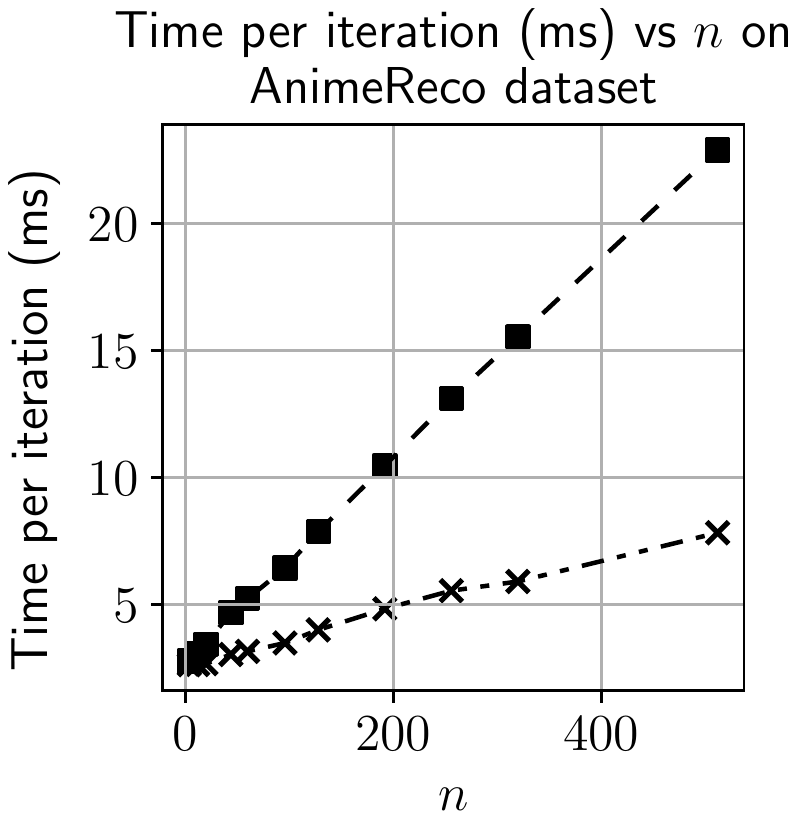}
    \caption{Time per iteration (ms) vs $\protect n$ on AnimeReco dataset}
    \label{fig:18}
  \end{minipage}
\end{figure*}
\par
In the following experiment, we observe the effect of the dimensionality $n$ of profile vectors on private and non-private training. The step-size $\mu$ was kept constant at 0.0005. Here, we notice from figures \ref{fig:7} -- \ref{fig:9} that the non-private recommender system reaches very close to zero RMSE for high values of $n$, and the non-private RMSE diverges quite noticeably from differentially private (DP) RMSE curves. The DP-RMSE curves show an indication of saturation at higher errors. A possible reason for such behavior can be the fact that non-private MF permits the algorithm to learn profile vectors that describe the users and their ratings the best. Consequently, a profile vector of a higher dimension can learn more about a user, enabling it to reach almost zero error conditions. On the other hand, differentially private MF puts a restraint on the learning process that prevents it from being too sensitive to the user's ratings. Differential privacy only allows the algorithm to extract aggregate information. Now, if we look at the $Overall \hspace{3pt} \epsilon$ vs. $n$ plots in figures \ref{fig:10} -- \ref{fig:12}, for $\epsilon_i = 0.5$ and below, the overall privacy risk tends to become invariant with respect to the increasing values of $n$. For higher $\epsilon_i$ values, $Overall \hspace{3pt} \epsilon$ tends to reach a high peak and then slightly decline with increasing profile vector dimensionality. Using $\epsilon_i$ values higher than 0.5 is evidently not useful as it increases privacy risk too much compared to the benefit of lower RMSE it provides. Moreover, it is not justified to use very high values on $n$ just because of the advantage of lower RMSE in exchange for an almost negligible increase in privacy risk. The reason is that for values of $n$ such as 320 or higher, the computation time per iteration can be several times longer than the time needed for values of $n$ such as 20 or lower as depicted in figures \ref{fig:16} -- \ref{fig:18}. The times per iteration in the figures are taken as the mean time for 300 iterations for each configuration.
\begin{figure*}[t]
  \centering
  \begin{minipage}[t]{0.40\textwidth}
    \includegraphics[width=\textwidth]{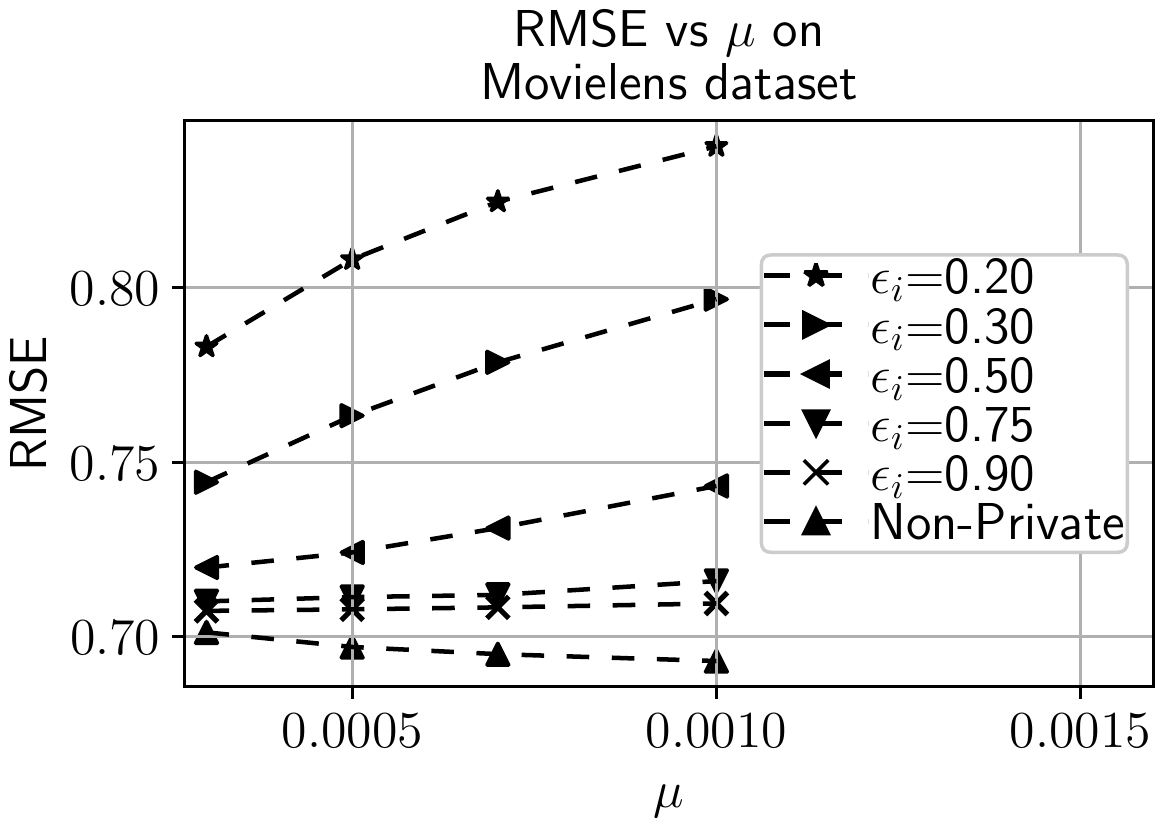}
    \caption{RMSE vs $\protect \mu$ on Movielens dataset}
    \label{fig:19}
  \end{minipage}
  \hfill
  \begin{minipage}[t]{0.29\textwidth}
    \includegraphics[width=\textwidth]{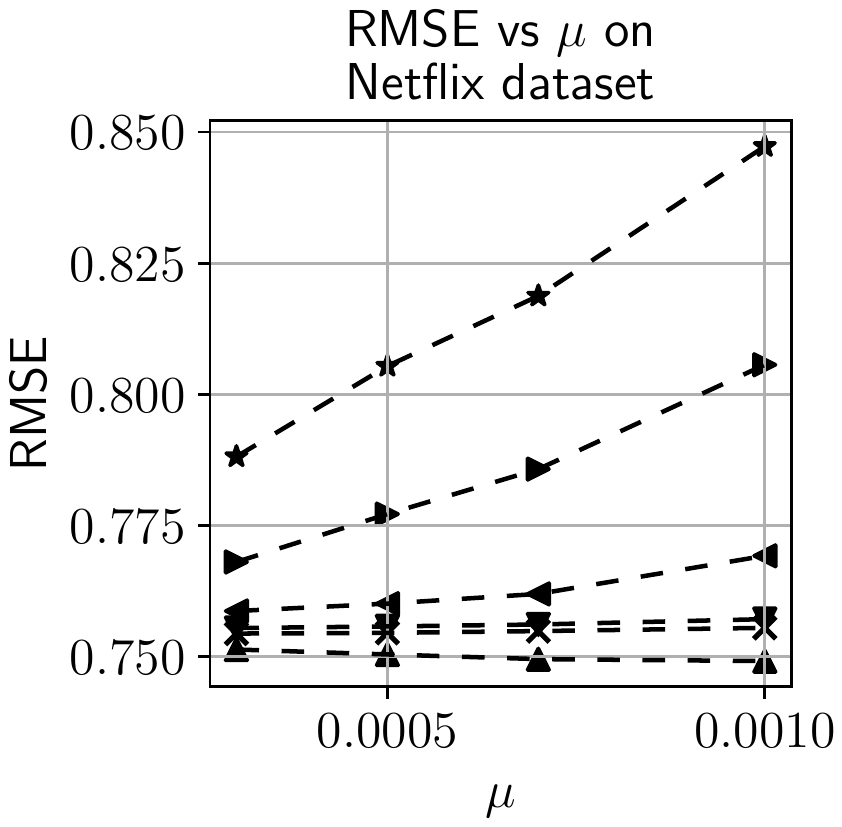}
    \caption{RMSE vs $\protect \mu$ on Netflix dataset}
    \label{fig:20}
  \end{minipage}
  \hfill
  \begin{minipage}[t]{0.28\textwidth}
    \includegraphics[width=\textwidth]{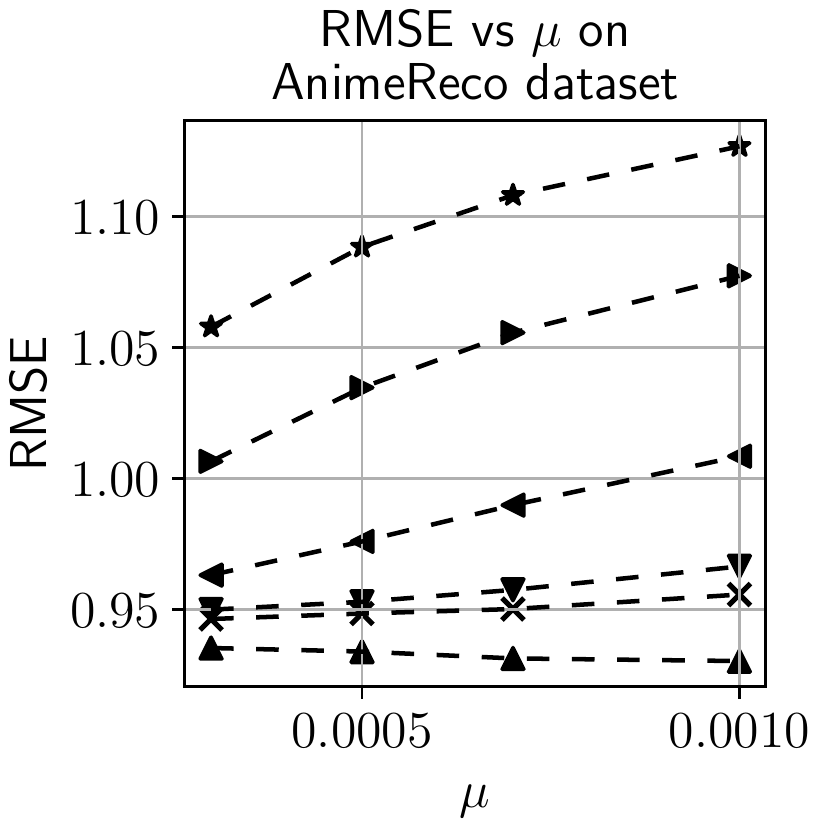}
    \caption{RMSE vs $\protect \mu$ on AnimeReco dataset}
    \label{fig:21}
  \end{minipage}
\end{figure*}

\begin{figure*}[t]
  \centering
  \begin{minipage}[t]{0.41\textwidth}
    \includegraphics[width=\textwidth]{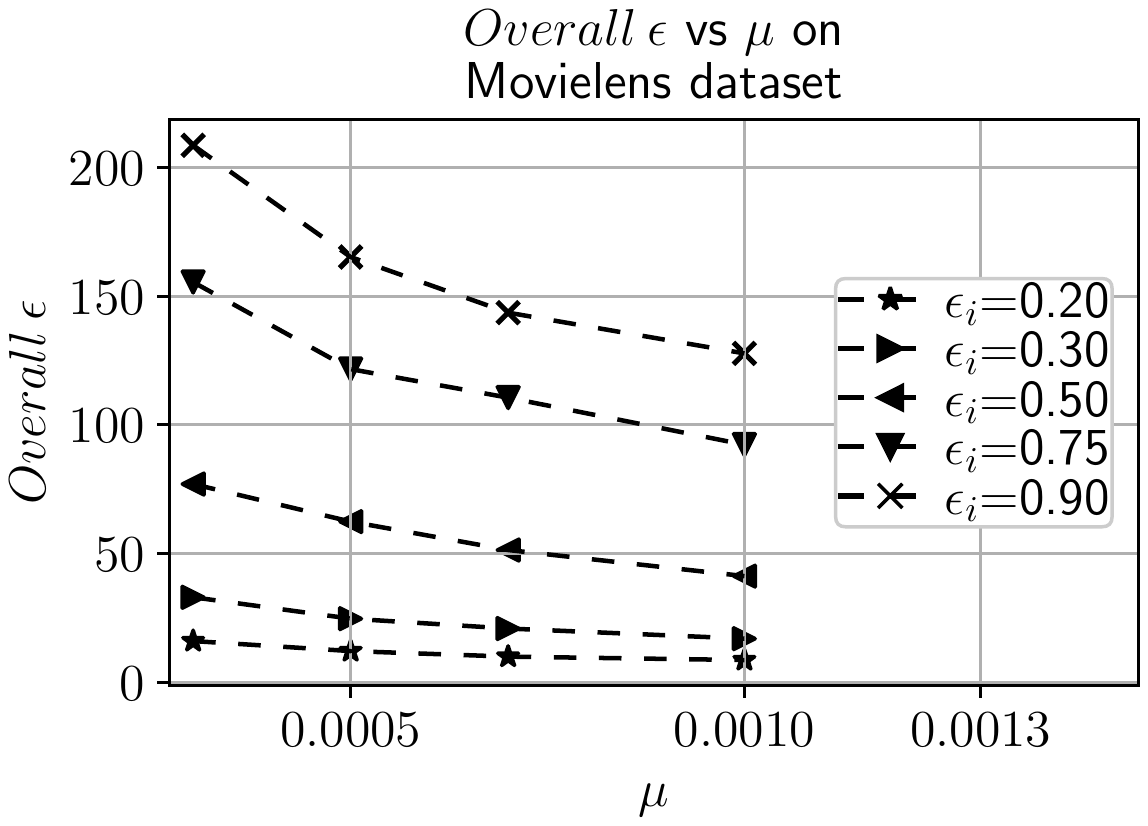}
    \caption{$Overall \hspace{3pt} \protect \epsilon$ vs $\protect \mu$ on Movielens dataset}
    \label{fig:22}
  \end{minipage}
  \hfill
  \begin{minipage}[t]{0.285\textwidth}
    \includegraphics[width=\textwidth]{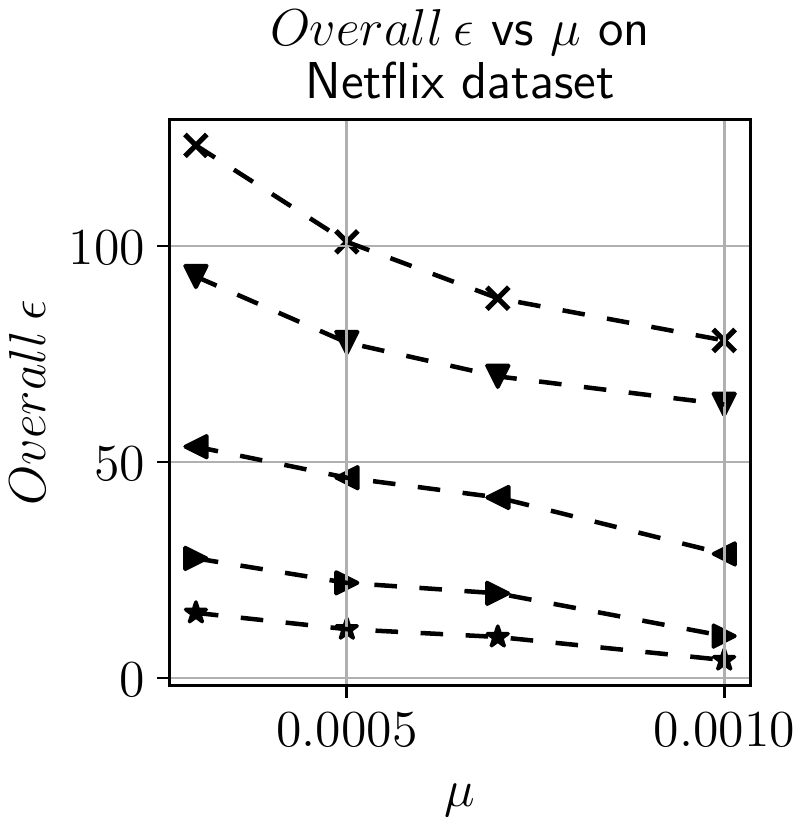}
    \caption{$Overall \hspace{3pt} \protect \epsilon$ vs $\protect \mu$ on Netflix dataset}
    \label{fig:23}
  \end{minipage}
  \hfill
  \begin{minipage}[t]{0.285\textwidth}
    \includegraphics[width=\textwidth]{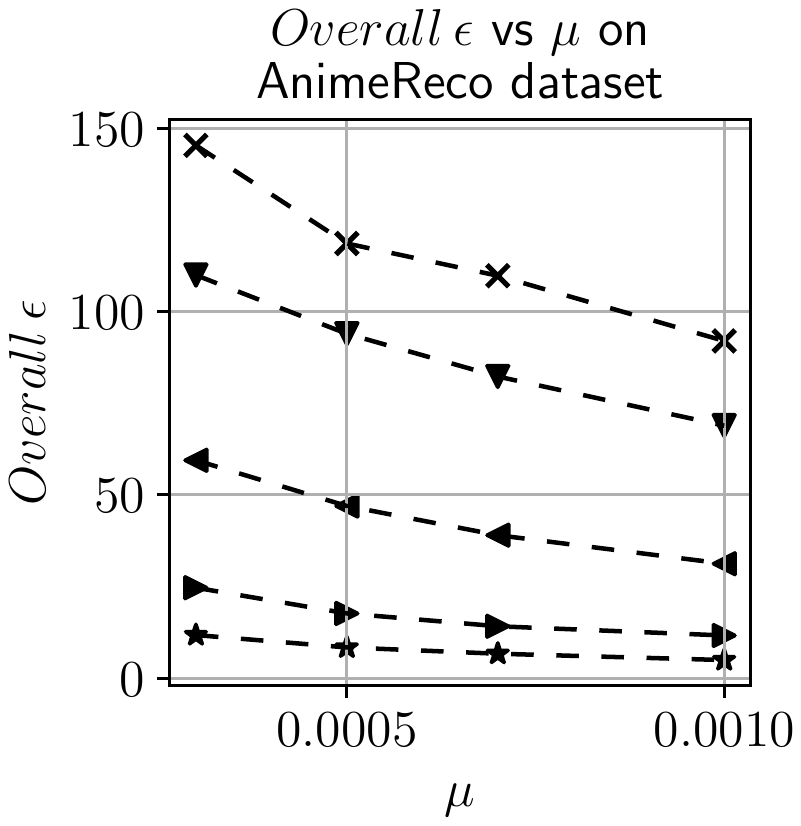}
    \caption{$Overall \hspace{3pt} \protect \epsilon$ vs $\protect \mu$ on AnimeReco dataset}
    \label{fig:24}
  \end{minipage}
\end{figure*}

\begin{figure*}[t]
  \centering
  \begin{minipage}[t]{0.41\textwidth}
    \includegraphics[width=\textwidth]{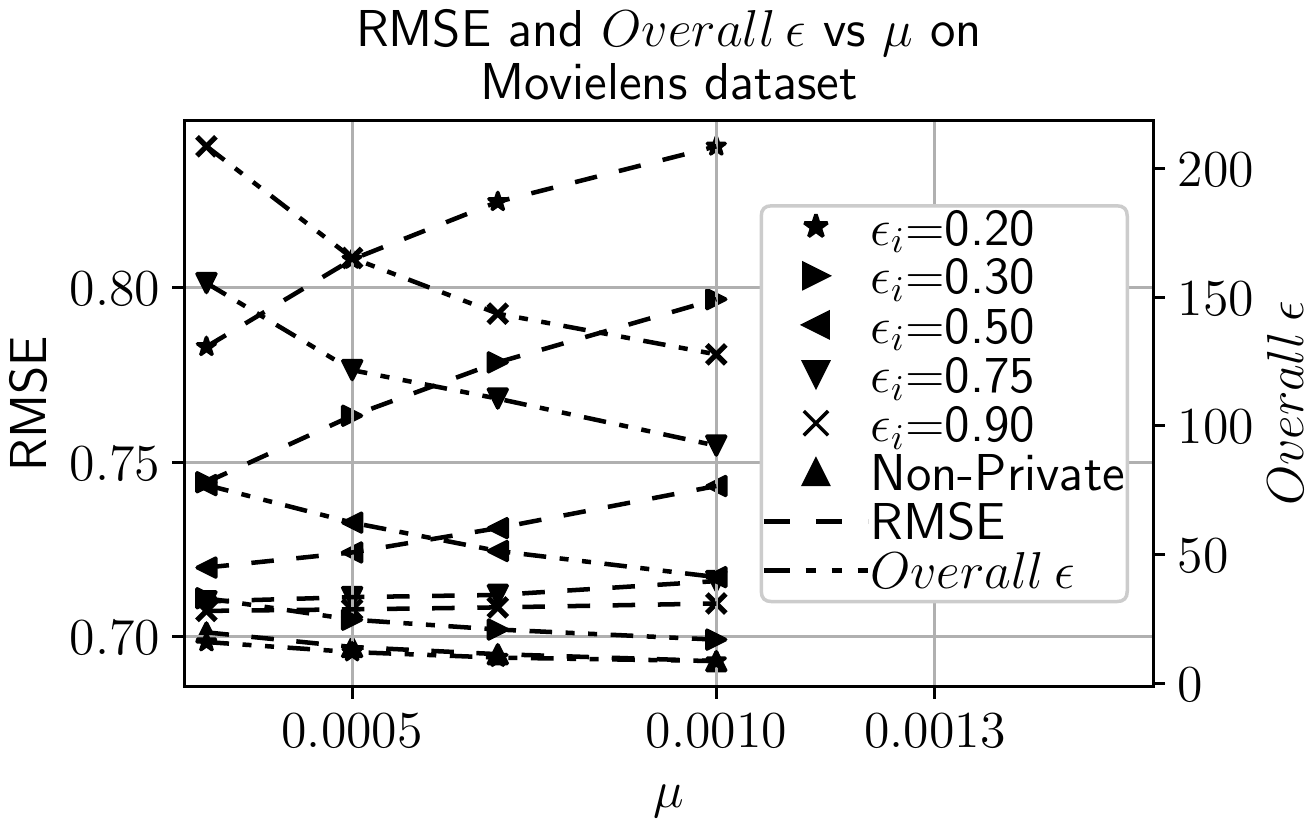}
    \caption{RMSE and $Overall \hspace{3pt} \protect \epsilon$ vs $\protect \mu$ on Movielens dataset}
    \label{fig:25}
  \end{minipage}
  \hfill
  \begin{minipage}[t]{0.29\textwidth}
    \includegraphics[width=\textwidth]{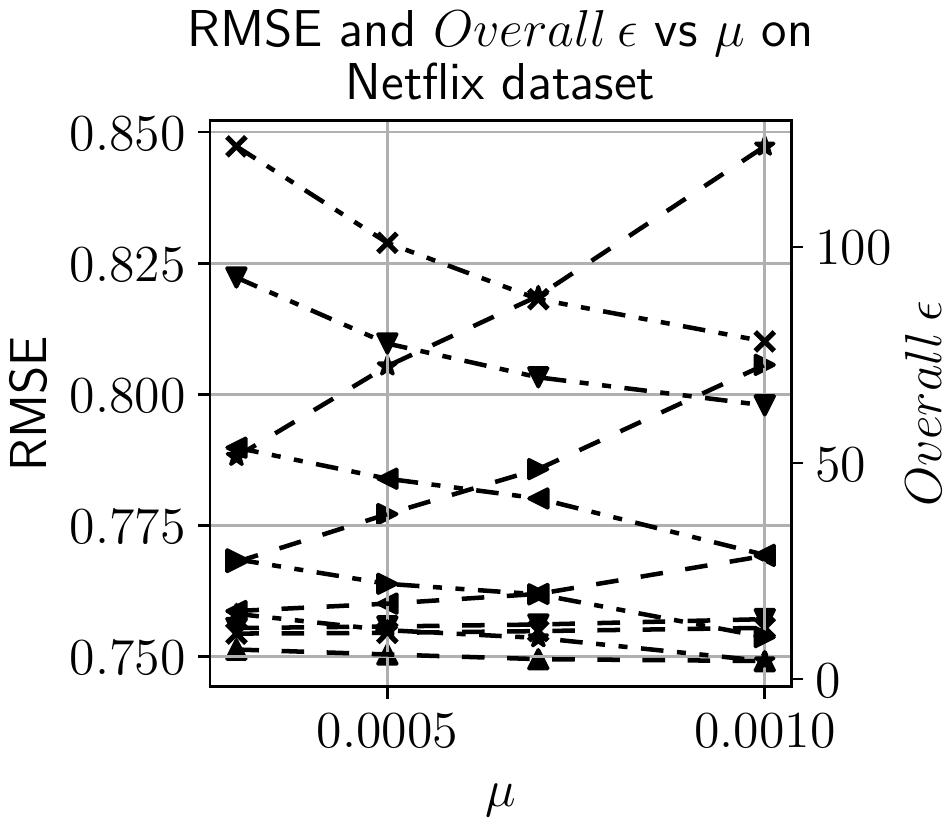}
    \caption{RMSE and $Overall \hspace{3pt} \protect \epsilon$ vs $\protect \mu$ on Netflix dataset}
    \label{fig:26}
  \end{minipage}
  \hfill
  \begin{minipage}[t]{0.28\textwidth}
    \includegraphics[width=\textwidth]{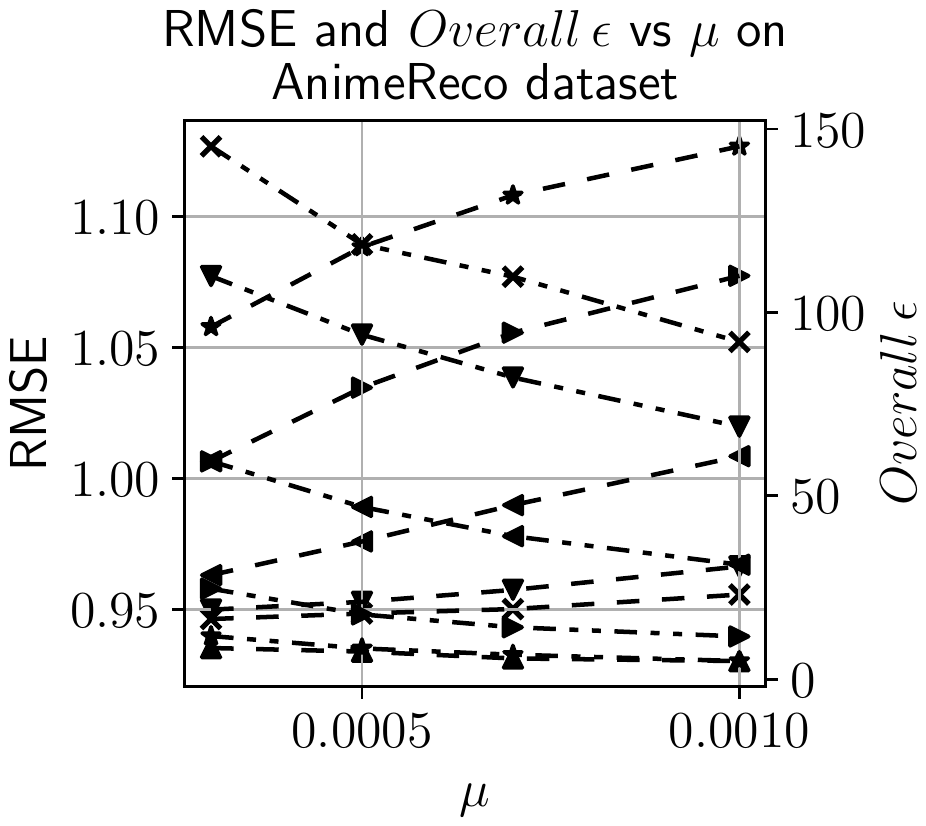}
    \caption{RMSE and $Overall \hspace{3pt} \protect \epsilon$ vs $\protect \mu$ on AnimeReco dataset}
    \label{fig:27}
  \end{minipage}
\end{figure*}
\par
Through the final experiment, we uncover the effects of increasing step-size $\mu$ on the learning process. The profile vector dimension $n$ was kept at 20 for this experiment. By observing figures \ref{fig:19} -- \ref{fig:24}, we find that, using larger step-size results in lower $Overall \hspace{3pt} \epsilon$ and higher prediction errors. From \eqref{overall_epsilon}, we know that $Overall \hspace{3pt} \epsilon$ is directly proportional to the number of steps the training took. As all the other parameters required to calculate $Overall \hspace{3pt} \epsilon$ were unchanged, decreasing values indicate that the training took fewer steps. The cause of this behavior is that too large a step-size can cause gradient descent to miss the minima of the loss function. It can even cause gradient descent to start diverging. As incorporating the Gaussian mechanism involves adding Gaussian noise to the target function, the gradient, in our case, the descent is not as smooth as before. Consequently, the process cannot handle a relatively large step-size. Larger step-sizes on top of noisy gradients may have caused the descent to miss the desired minima, resulting in higher prediction errors. All these characteristics we have investigated can be utilized to pick the necessary parameter values to develop differentially private recommendation systems as per accuracy and privacy requirements.

\section{Conclusion}
\label{sec:conclusion}
In this paper, we proposed a differentially private matrix factorization based recommendation system via the Gaussian mechanism. We showed that our algorithm provides an $(\epsilon_\mathrm{opt}, \delta_r)$ differentially private user profile matrix containing the inferred user profile vectors, which can then be used with appropriate movie (or item) profile vectors for the purpose of recommending movies (or items) to the user. We observed different characteristics of our proposed algorithm that can be utilized for formulating a practical privacy-preserving recommendation system. We demonstrated the superiority of our algorithm with varying privacy levels and other key parameters on three real datasets. We also showed that our algorithm's utility can closely match that of the non-private algorithm for certain parameter choices. Recognizing the inherent issue of the \emph{spent privacy budget} of multi-shot algorithms, we provided a tighter characterization of the overall privacy budget of our algorithm using the R\'enyi Differential Privacy. An interesting future work could be to develop computationally efficient matrix factorization that works well with differential privacy. Another promising future work could be to introduce personalized differential privacy, proposed in \cite{liu2015fast}, to our work; and adapting our proposed method to improved versions of MF, such as the Weighted Non-negative Matrix Factorization \cite{hua2015differentially}.


\bibnote{movielens}{\emph{[Dataset]}}
\bibnote{netflix}{\emph{[Dataset]}}
\bibnote{animereco}{\emph{[Dataset]}}

\bibliographystyle{elsarticle-num} 
\bibliography{cas-refs}

\begin{thebibliography}{10}
\expandafter\ifx\csname url\endcsname\relax
  \def\url#1{\texttt{#1}}\fi
\expandafter\ifx\csname urlprefix\endcsname\relax\def\urlprefix{URL }\fi
\expandafter\ifx\csname href\endcsname\relax
  \def\href#1#2{#2} \def\path#1{#1}\fi

\bibitem{su2009survey}
X.~Su, T.~M. Khoshgoftaar, A survey of collaborative filtering techniques,
  Advances in artificial intelligence 2009 (2009).
\newblock \href {https://doi.org/10.1155/2009/421425}
  {\path{doi:10.1155/2009/421425}}.

\bibitem{koren2009matrix}
Y.~Koren, R.~Bell, C.~Volinsky, Matrix factorization techniques for recommender
  systems, Computer 42~(8) (2009) 30--37.
\newblock \href {https://doi.org/10.1109/MC.2009.263}
  {\path{doi:10.1109/MC.2009.263}}.

\bibitem{candes2009exact}
E.~J. Cand{\`e}s, B.~Recht, Exact matrix completion via convex optimization,
  Foundations of Computational mathematics 9~(6) (2009) 717--772.
\newblock \href {https://doi.org/10.1007/s10208-009-9045-5}
  {\path{doi:10.1007/s10208-009-9045-5}}.

\bibitem{bennett2007netflix}
J.~Bennett, S.~Lanning, et~al., The netflix prize, in: Proceedings of KDD cup
  and workshop 2007, New York, NY, USA., 2007, pp. 3--6.

\bibitem{aimeur2008alambic}
E.~A{\"i}meur, G.~Brassard, J.~M. Fernandez, F.~S. Mani~Onana, Alambic: a
  privacy-preserving recommender system for electronic commerce, International
  Journal of Information Security 7~(5) (2008) 307--334.
\newblock \href {https://doi.org/10.1007/s10207-007-0049-3}
  {\path{doi:10.1007/s10207-007-0049-3}}.

\bibitem{calandrino2011you}
J.~A. Calandrino, A.~Kilzer, A.~Narayanan, E.~W. Felten, V.~Shmatikov, " you
  might also like:" privacy risks of collaborative filtering, in: 2011 IEEE
  symposium on security and privacy, IEEE, 2011, pp. 231--246.
\newblock \href {https://doi.org/10.1109/SP.2011.40}
  {\path{doi:10.1109/SP.2011.40}}.

\bibitem{mcsherry2009differentially}
F.~McSherry, I.~Mironov, Differentially private recommender systems: Building
  privacy into the netflix prize contenders, in: Proceedings of the 15th ACM
  SIGKDD international conference on Knowledge discovery and data mining,
  Association for Computing Machinery, 2009, pp. 627--636.
\newblock \href {https://doi.org/10.1145/1557019.1557090}
  {\path{doi:10.1145/1557019.1557090}}.

\bibitem{dwork2014algorithmic}
C.~Dwork, A.~Roth, et~al., The algorithmic foundations of differential privacy,
  Found. Trends Theor. Comput. Sci. 9~(3-4) (2014) 211--407.
\newblock \href {https://doi.org/10.1561/0400000042}
  {\path{doi:10.1561/0400000042}}.

\bibitem{narayanan2006break}
A.~Narayanan, V.~Shmatikov, How to break anonymity of the netflix prize
  dataset, arXiv preprint cs/0610105 (2006).

\bibitem{nikolaenko2013privacy}
V.~Nikolaenko, S.~Ioannidis, U.~Weinsberg, M.~Joye, N.~Taft, D.~Boneh,
  Privacy-preserving matrix factorization, in: Proceedings of the 2013 ACM
  SIGSAC conference on Computer \& communications security, Association for
  Computing Machinery, 2013, pp. 801--812.
\newblock \href {https://doi.org/10.1145/2508859.2516751}
  {\path{doi:10.1145/2508859.2516751}}.

\bibitem{dwork2006calibrating}
C.~Dwork, F.~McSherry, K.~Nissim, A.~Smith, Calibrating noise to sensitivity in
  private data analysis, in: Theory of cryptography conference, Springer, 2006,
  pp. 265--284.
\newblock \href {https://doi.org/10.1007/11681878_14}
  {\path{doi:10.1007/11681878_14}}.

\bibitem{machanavajjhala2011personalized}
A.~Machanavajjhala, A.~Korolova, A.~D. Sarma, Personalized social
  recommendations-accurate or private?, arXiv preprint arXiv:1105.4254 (2011).

\bibitem{mcsherry2007mechanism}
F.~McSherry, K.~Talwar, Mechanism design via differential privacy, in: 48th
  Annual IEEE Symposium on Foundations of Computer Science (FOCS'07), IEEE,
  2007, pp. 94--103.
\newblock \href {https://doi.org/10.1109/FOCS.2007.66}
  {\path{doi:10.1109/FOCS.2007.66}}.

\bibitem{park2011differentially}
Y.~Park, A.~Acharya, Differentially private recommendation systems: Practical
  implementation and interpretation (2011).

\bibitem{chaudhuri2011differentially}
K.~Chaudhuri, C.~Monteleoni, A.~D. Sarwate, Differentially private empirical
  risk minimization, Journal of Machine Learning Research 12~(3) (2011)
  1069--1109.

\bibitem{chaudhuri2008privacy}
K.~Chaudhuri, C.~Monteleoni, Privacy-preserving logistic regression, Advances
  in neural information processing systems 21 (2008) 289--296.

\bibitem{liu2015fast}
Z.~Liu, Y.-X. Wang, A.~Smola, Fast differentially private matrix factorization,
  in: Proceedings of the 9th ACM Conference on Recommender Systems, Association
  for Computing Machinery, 2015, pp. 171--178.
\newblock \href {https://doi.org/10.1145/2792838.2800191}
  {\path{doi:10.1145/2792838.2800191}}.

\bibitem{hua2015differentially}
J.~Hua, C.~Xia, S.~Zhong, Differentially private matrix factorization, in:
  Proceedings of the Twenty-Fourth International Joint Conference on Artificial
  Intelligence (IJCAI 2015), AAAI Press, 2015, pp. 1763--1770.

\bibitem{imtiaz2018distributed}
H.~Imtiaz, A.~D. Sarwate, Distributed differentially private algorithms for
  matrix and tensor factorization, IEEE Journal of Selected Topics in Signal
  Processing 12~(6) (2018) 1449--1464.
\newblock \href {https://doi.org/10.1109/JSTSP.2018.2877842}
  {\path{doi:10.1109/JSTSP.2018.2877842}}.

\bibitem{han2016differentially}
S.~Han, U.~Topcu, G.~J. Pappas, Differentially private distributed constrained
  optimization, IEEE Transactions on Automatic Control 62~(1) (2016) 50--64.
\newblock \href {https://doi.org/10.1109/TAC.2016.2541298}
  {\path{doi:10.1109/TAC.2016.2541298}}.

\bibitem{dwork2014analyze}
C.~Dwork, K.~Talwar, A.~Thakurta, L.~Zhang, Analyze gauss: optimal bounds for
  privacy-preserving principal component analysis, in: Proceedings of the
  forty-sixth annual ACM symposium on Theory of computing, Association for
  Computing Machinery, 2014, pp. 11--20.
\newblock \href {https://doi.org/10.1145/2591796.2591883}
  {\path{doi:10.1145/2591796.2591883}}.

\bibitem{hardt2014noisy}
M.~Hardt, E.~Price, The noisy power method: A meta algorithm with applications,
  Advances in neural information processing systems 27 (2014) 2861--2869.

\bibitem{imtiaz2016symmetric}
H.~Imtiaz, A.~D. Sarwate, Symmetric matrix perturbation for
  differentially-private principal component analysis, in: 2016 IEEE
  International Conference on Acoustics, Speech and Signal Processing (ICASSP),
  IEEE, 2016, pp. 2339--2343.
\newblock \href {https://doi.org/10.1109/ICASSP.2016.7472095}
  {\path{doi:10.1109/ICASSP.2016.7472095}}.

\bibitem{sarwate2013signal}
A.~D. Sarwate, K.~Chaudhuri, Signal processing and machine learning with
  differential privacy: Algorithms and challenges for continuous data, IEEE
  signal processing magazine 30~(5) (2013) 86--94.
\newblock \href {https://doi.org/10.1109/MSP.2013.2259911}
  {\path{doi:10.1109/MSP.2013.2259911}}.

\bibitem{mironov2017renyi}
I.~Mironov, R{\'e}nyi differential privacy, in: 2017 IEEE 30th computer
  security foundations symposium (CSF), IEEE, 2017, pp. 263--275.
\newblock \href {https://doi.org/10.1109/CSF.2017.11}
  {\path{doi:10.1109/CSF.2017.11}}.

\bibitem{dwork2008differential}
C.~Dwork, Differential privacy: A survey of results, in: International
  conference on theory and applications of models of computation, Springer,
  2008, pp. 1--19.
\newblock \href {https://doi.org/10.1007/978-3-540-79228-4_1}
  {\path{doi:10.1007/978-3-540-79228-4_1}}.

\bibitem{petersen2008matrix}
K.~B. Petersen, M.~S. Pedersen, et~al., The matrix cookbook, Technical
  University of Denmark 7~(15) (2008) 510.

\bibitem{movielens}
GroupLens, \href{https://grouplens.org/datasets/movielens/1m/}{Movielens 1m
  dataset}, online available; last accessed at 09/08/2022.
\newline\urlprefix\url{https://grouplens.org/datasets/movielens/1m/}

\bibitem{netflix}
Netflix, \href{https://www.kaggle.com/netflix-inc/netflix-prize-data}{Netflix
  prize data}, online available; last accessed at 09/08/2022.
\newline\urlprefix\url{https://www.kaggle.com/netflix-inc/netflix-prize-data}

\bibitem{animereco}
CooperUnion,
  \href{https://www.kaggle.com/CooperUnion/anime-recommendations-database}{Anime
  recommendations database}, online available; last accessed at 09/08/2022.
\newline\urlprefix\url{https://www.kaggle.com/CooperUnion/anime-recommendations-database}

\end{thebibliography}





\newpage
\begin{wrapfigure}{l}{25mm} 
    \includegraphics[width=1in,height=1.25in,clip,keepaspectratio]{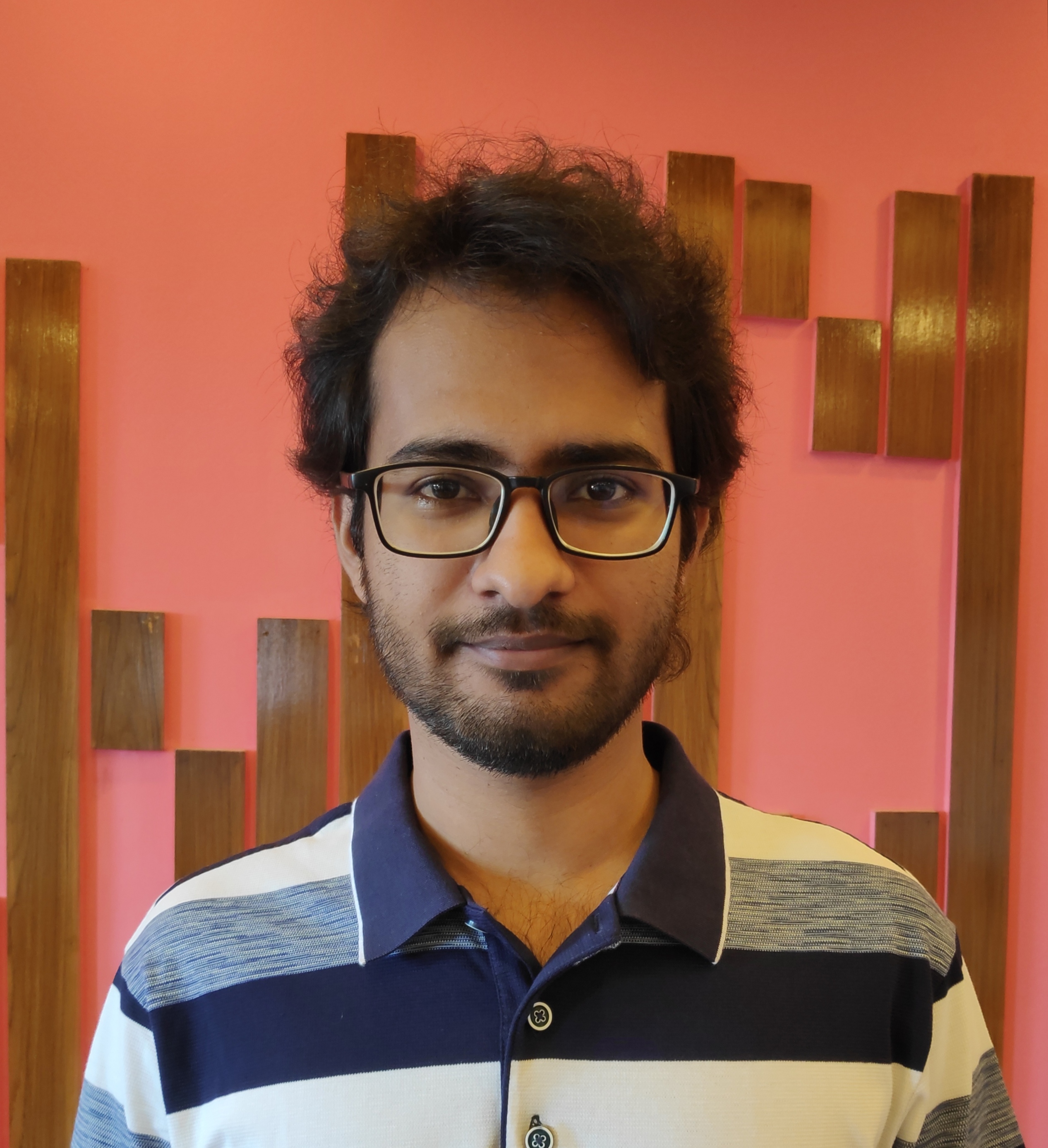}
\end{wrapfigure}\par
    \noindent \textbf{Sohan Salahuddin Mugdho} is an undergraduate student with the Department of Electrical and Electronic Engineering at Bangladesh University of Engineering and Technology (BUET), Dhaka, Bangladesh. He is majoring in Communications and Signal Processing. He has been awarded the Dean’s List Scholarship in one academic term. His research interests are in the areas of machine learning, signal/image processing, computer vision, natural language processing and centralized/distributed differentially private machine learning algorithms.\par
    
\begin{wrapfigure}{l}{25mm} 
    \includegraphics[width=1in,height=1.25in,clip,keepaspectratio]{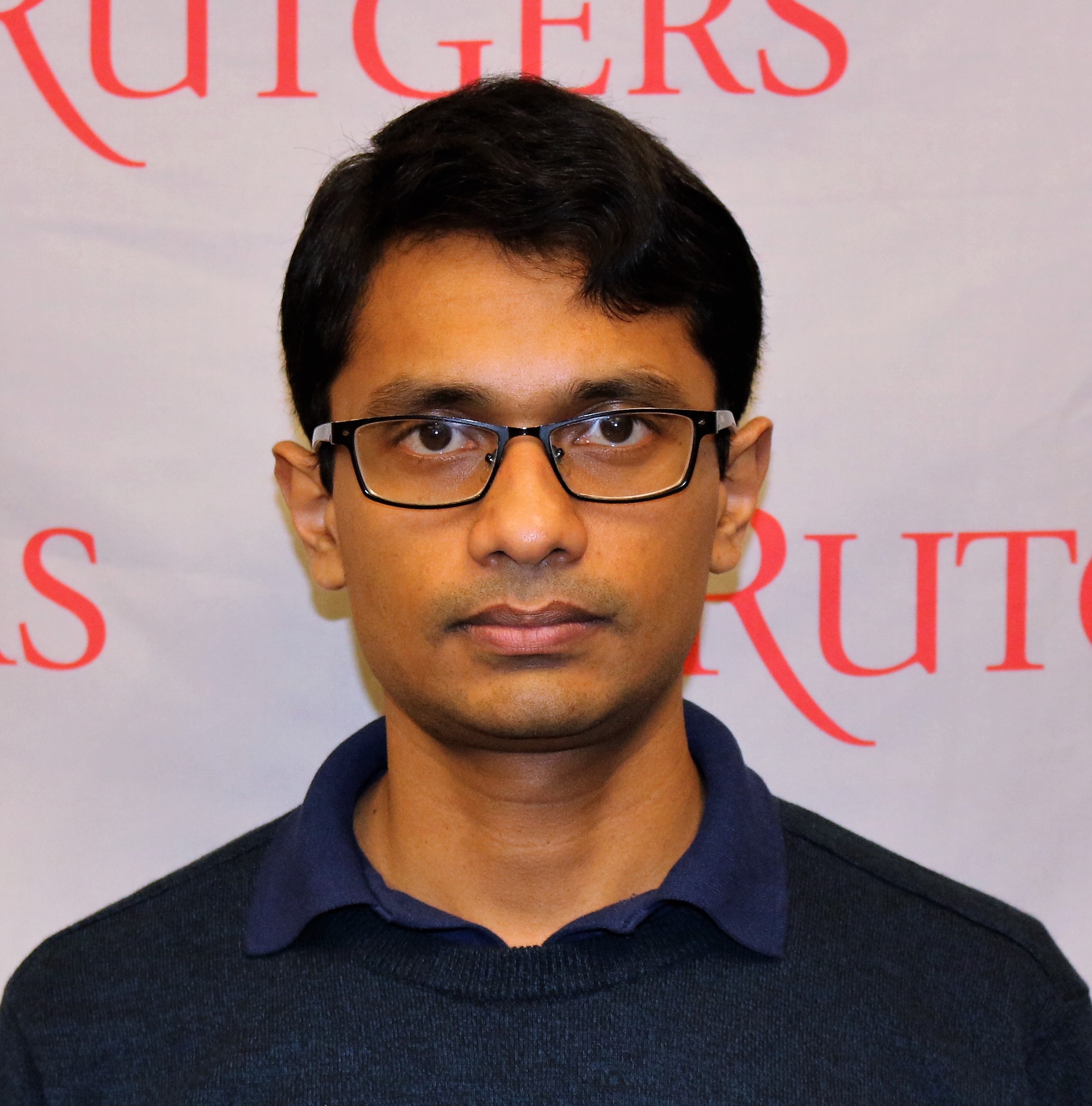}
\end{wrapfigure}\par
    \noindent \textbf{Hafiz Imtiaz} completed his PhD from Rutgers University, New Jersey, USA in 2020. He earned his second M.Sc. degree from Rutgers University in 2017, his first M.Sc. degree and his B.Sc. degree from Bangladesh University of Engineering and Technology (BUET), Dhaka, Bangladesh in 2011 and 2009, respectively. He is currently an Associate Professor with the Department of Electrical and Electronic Engineering at BUET. Previously, he worked as an intern at Qualcomm and Intel Labs, focusing on activity/image analysis and adversarial attacks on neural networks, respectively. His primary area of research includes developing privacy-preserving machine learning algorithms for decentralized data settings. More specifically, he focuses on matrix and tensor factorization, and optimization problems, which are core components of many modern machine learning algorithms.\par

\end{document}